\newtheorem{theorem}{Theorem}[section]
\newtheorem{proposition}[theorem]{Proposition}
\newaliascnt{lemma}{theorem}  
\newtheorem{lemma}[lemma]{Lemma}  
\newaliascnt{definition}{theorem}  
\newtheorem{definition}[definition]{Definition}
\newaliascnt{example}{theorem}  
\newtheorem{example}[example]{Example}
\newcommand{\ubar}[1]{\underaccent{\bar}{#1}}
\newenvironment{indented}{\begin{indented}}{\end{indented}}
\def\indented{\list{}{\itemsep=0\p@\labelsep=0\p@\itemindent=0\p@
   \labelwidth=0\p@\leftmargin=\mathindent\topsep=0\p@\partopsep=0\p@
   \parsep=0\p@\listparindent=15\p@}\footnotesize\rm}
\numberwithin{equation}{section}
\begin{document}
\begin{center}
\LARGE{\textbf{Singularity confinement in delay-differential Painlev\'e equations}}
\end{center}
\vspace{1ex}
\begin{center}
\normalsize{Alexander Stokes}\\
\noindent \small{Department of Mathematics, University College London,\\
Gower Street, London, WC1E 6BT, UK}
\end{center}
\normalsize


\begin{abstract} 
We study singularity confinement phenomena in examples of delay-differential Painlev\'e equations, which involve shifts and derivatives with respect to a single independent variable. We propose a geometric interpretation of our results in terms of mappings between jet spaces, defining certain singularities analogous to those of interest in the singularity analysis of discrete systems, and what it means for them to be confined. For three previously studied examples of delay-differential Painlev\'e equations, we describe all such singularities and show they are confined in the sense of our geometric description.
\end{abstract}

\section{Introduction}
Singularity confinement is a phenomenon first proposed as an integrability criterion for discrete systems \cite{RGP1991}, and has been used to great effect to obtain discrete analogues of the Painlev\'e differential equations \cite{RGWSTRIHOMOGRAPHIC,RGH1991, RGdPs}. Its geometric interpretation has led to novel connections between discrete integrable systems and birational algebraic geometry, most notably Sakai's geometric framework and classification scheme for discrete Painlev\'e equations \cite{SAKAI2001}. We study delay-differential equations, for which a kind of singularity confinement test has been used to isolate integrability candidates and obtain delay-differential equations of Painlev\'e-type \cite{RGMOREIRA,RGTAMIZ}. These so-called delay Painlev\'e equations possess analogues of many integrability properties of their discrete and differential counterparts, and it is natural to ask whether a geometric theory may be developed for them. \\

Compared to the discrete case, the understanding of singularity confinement in this class of equations is in its infancy. In particular, we do not have available to us the definition of singularity confinement in second-order discrete systems as the iteration mappings of the systems lifting to isomorphisms between rational surfaces. Further, even for heuristic observations in the absence of a proper definition of confinement, the presence of derivatives leads to challenges, as different multiplicities with which solutions take singular values lead to infinitely many behaviours to be checked. We consider the following three examples of delay Painlev\'e equations
\begin{equation} \label{ddP1}
u( \bar{u} - \ubar{u})=  a u - b u',
\end{equation}
\begin{equation} \label{dHKdV}
v^2( \bar{v} - \ubar{v})=  p v + q v',
\end{equation}
\begin{equation} \label{eq20}
\bar{w} w = \ubar{w}\left( \lambda z w + \alpha w' \right),
\end{equation}
where $u, v$ and $w$ are functions of the complex independent variable $z$, we take $p, q, a, b, \lambda, \alpha $ to be complex parameters, and we denote up- and down-shifts by $\bar{u}(z)= u(z+1), \ubar{u}(z) = u(z-1)$ etc.\\
For the purpose of isolating integrability candidates in the class of delay-differential equations, it seems to have been sufficient to require only that the simplest singularities exhibit confinement-type behaviour, and all three of the examples above may be obtained by such means. However, if singularity confinement is to lead to a geometric theory in this case, a more detailed analysis is required. It is the first steps in this direction that we take in this paper, by extending previous observations to account for different multiplicities with which solutions take singular values, as well as giving a geometric description of singularities that may arise in delay-differential equations and what it means for them to be confined. \\

The equation \eqref{ddP1} was obtained by Quispel, Capel and Sahadevan \cite{QCS} as a similarity reduction of the Kac-van Moerbeke differential-difference equation, also known as the Manakov equation or Volterra lattice. They also showed that it has a continuum limit to the first differential Painlev\'e equation and that it exhibits some singularity confinement-type behaviour. The equation \eqref{dHKdV} is a symmetry reduction of a known integrable differential-difference modified Korteveg-de Vries equation, and extensions of it have been studied by Halburd and Korhonen from the point of view of Nevanlinna theory \cite{RODRISTO}. Further, it has a continuum limit to the first Painlev\'e equation and may be obtained from B\"acklund transformations of the third Painlev\'e equation \cite{BJORNTHESIS}, or alternatively using singularity confinement tests adapted from those in \cite{SCLOWDEGREE}. The third equation \eqref{eq20} was isolated as an integrability candidate by Ramani, Grammaticos and Moreira \cite{RGMOREIRA} using a kind of singularity confinement test (which also recovered equation \eqref{ddP1}), and has a continuum limit to the first Painlev\'e equation. We also point out that other integrability properties analogous to those of differential and discrete Painlev\'e equations have been studied in equations \eqref{ddP1},\eqref{dHKdV},\eqref{eq20}, for example the fact that they may be rewritten in bilinear forms \cite{CARSTEABILINEAR} and that degenerate cases admit elliptic function solutions \cite{BJORNTHESIS}, in parallel with the discrete case where autonomous degenerations of discrete Painlev\'e equations are Quispel-Roberts-Thompson (QRT) mappings \cite{QRT1988, QRT1989}, solved by elliptic functions.\\

We also remark that we are considering examples of so-called three-point delay differential equations, which are of the form
\begin{equation} \label{threepoint}
\bar{u} = \frac{ f_1(u,u',...) + f_2(u,u',...) \ubar{u}}{ f_3(u,u',...) + f_4(u,u',...) \ubar{u}},
\end{equation} 
where $f_i$ are polynomials in $u$ and its derivatives. There are known integrable delay-differential equations of other forms, for example the so-called bi-Riccati equations \cite{RGMOREIRA,BJORNRICCATI}, but studies of singularity confinement in these more closely resembles classical Painlev\'e analysis than birational geometry, and will not be discussed in this paper. The class of three-point equations is the one considered by Halburd and Korhonen through the Nevanlinna theoretic approach \cite{RODRISTO}, and fits into the family for which Viallet defined algebraic entropy in the delay-differential setting \cite{DELAYALGENT}.

\subsection{Background}

The differential Painlev\'e equations $\text{P}_{\mathrm{I}}$-$\text{P}_{\mathrm{VI}}$ are six nonlinear second-order ordinary differential equations (ODEs), the study of which has become one of the cornerstones of the field of integrable systems. Painlev\'e, Gambier, Fuchs and their collaborators considered a large class of second-order ODEs, and isolated those for which all solutions are single-valued about any movable singularities (those whose locations depend on the initial conditions). This condition is now known as the Painlev\'e property, and of all the equivalence classes of equations obtained, the six Painlev\'e equations arose as representatives whose general solutions could not be expressed in terms of known functions. These new special functions, known as the Painlev\'e transcendents, play a central role in modern nonlinear physics, see e.g. \cite{CLARKSONSPECIAL,fokas2006painleve} and numerous references within.\\

The differential Painlev\'e equations admit a geometric description in terms of rational surfaces obtained by blowing up certain singularities of the equations. Discovered by K. Okamoto \cite{OKAMOTO1979}, for each equation this comes in the form of a bundle over the independent variable space whose fibres are rational surfaces with certain curves removed. The bundle, known as Okamoto's space, admits a foliation by solution curves of the ODE system transverse to the fibres, and each fibre can be regarded as a space of initial conditions for the system. Further, the curves which were removed from each fibre (the inaccessible divisors) have irreducible components whose intersection configuration is encoded in a Dynkin diagram of affine type, also known as an extended Dynkin diagram. It was also shown that Okamoto's space for each $\text{P}_{\mathrm{I}}-\text{P}_{\mathrm{VI}}$ essentially determines the differential equation \cite{MMT1999, MATSUMIYA1997,TS1997}, and can be used to explain many of their properties (see \cite{KNY2017} and references within). \\

Beginning in the 1990's, important steps were made towards defining and understanding discrete analogues of the Painlev\'e equations, through the proposal by Ramani and Grammaticos, together with Papageorgiou, of singularity confinement \cite{RGP1991} as the discrete counterpart to the Painlev\'e property. We will illustrate the singularity confinement phenomenon in the second order difference equation
\begin{equation} \label{scalarQRT}
f_{n+1} = \frac{(f_n-k)(f_n+k) f_{n-1}}{k^2 - f_n^2 +2 t f_n f_{n-1} },
\end{equation}
with parameters $k \neq 0, \pm 1$ and $t \neq 0$. The initial value problem for this equation requires two values of the solution, say $f_0, f_1$, which in almost all cases will allow the values $f_2, f_3$ and so on to be determined recursively. The system \eqref{scalarQRT} has singular values $f_n = \pm k$, in the sense that if while iterating the solution takes one of these values, $f_{n+1}$ is zero independent of the value of $f_{n-1}$ (provided $f_{n-1} \neq 0$). This is usually referred to as a loss of a degree of freedom occurring while iterating the system. For generic (non-integrable) discrete systems, the singularity propagates, in the sense that the subsequent values $f_{n+2}, f_{n+3}, ...$ will all be determined independently of $f_{n-1}$ and the lost degree of freedom is never recovered. In our case, we may compute the next iterate $f_{n+2} = \mp k$, but then, importantly, arrive at an indeterminacy of the rational function giving $f_{n+3}$, namely at $(f_{n+1}, f_{n+2}) = (0, \mp k)$. If, however, we consider a perturbation of the singular value $f_n = \pm k$ by introducing a small parameter $\varepsilon$, we may compute the following in the small $\varepsilon$ limit:
\begin{equation*} 
f_{n-1} \neq 0, \quad f_{n} = \pm k + \mathcal{O}(\varepsilon), \quad f_{n+1} = \mathcal{O}(\varepsilon), \quad f_{n+2} = \mp k + \mathcal{O}(\varepsilon), \quad f_{n+3} = f_{n-1} + \mathcal{O}(\varepsilon).
\end{equation*}
If we define the values of the iterates as the limits of the above sequence as $\varepsilon \rightarrow 0$, the lost degree of freedom is said to be recovered in the value of $f_{n+3}$, and the singularity at $f_n=\pm k$ is said to be confined. The singularity confinement property for second order discrete systems can be understood as the existence of a space of initial conditions for the system: a family of rational surfaces to which the birational iteration mappings lift to isomorphisms. In fact, defining the values of the solution by iterating and taking limits as above implicitly lifts the system under certain blow-ups.  The example \eqref{scalarQRT} is in fact an example from the family of QRT mappings \cite{QRT1988, QRT1989}, the definition of which ensures they have a space of initial conditions given by a rational elliptic surface. \\

The equation \eqref{scalarQRT} can be considered as a birational mapping $\varphi : \mathbb{P}^1 \times \mathbb{P}^1 \rightarrow \mathbb{P}^1 \times \mathbb{P}^1$. Letting $f_{n-1} = y$, $f_n = x = \bar{y}$, $f_{n+1} = \bar{x}$, the iteration $(f_n, f_{n-1}) \mapsto (f_{n+1}, f_{n})$ gives a birational map $(x,y) \mapsto (\bar{x}, \bar{y})$. We consider this on $\mathbb{P}^1 \times \mathbb{P}^1$ via the usual charts. That is, we use $x,y$ as affine coordinates in the $\mathbb{P}^1$ factors, and introduce $X = 1/x, Y=1/y$, so $\mathbb{P}^1 \times \mathbb{P}^1$ is covered by the four charts $(x,y), (X,y), (x,Y), (X,Y)$. This mapping 
\begin{equation} \label{QRTphi}
\begin{aligned}
\varphi : ~\mathbb{P}^1 \times \mathbb{P}^1 &~\rightarrow~ \mathbb{P}^1 \times \mathbb{P}^1 \\
~~~(x,y) ~&\mapsto ~~(\bar{x}, \bar{y}) = \left( \frac{(x-k)(x+k) y}{k^2 - x^2 +2 t x y },  x \right)
\end{aligned}
\end{equation}
preserves each member of a pencil of elliptic curves on $\mathbb{P}^1 \times \mathbb{P}^1$, and the space of initial conditions is obtained from $\mathbb{P}^1 \times \mathbb{P}^1$ by resolving its basepoints through a number of blow-ups. This is ensured by the definition of the QRT map in terms of this pencil, which we outline now. Consider the matrices
\begin{equation}
\mathbf{A} = \left(
\begin{array}{ccc}
 0 & 0 & \frac{-1}{2 t} \\
 0 & 1 & 0 \\
 \frac{-1}{2 t} & 0 & \frac{k^2}{2 t} \\
\end{array}
\right), \quad \quad \mathbf{B} = \left(
\begin{array}{ccc}
 1 & 0 & 0 \\
 0 & 0 & 0 \\
 0 & 0 & 0 \\
\end{array}
\right),
\end{equation}
where again $k \neq 0, \pm 1$ and $t \neq 0$, which define a pencil of biquadratic curves $\left\{ \Gamma_{[\alpha : \beta ]} ~:~ [\alpha : \beta] \in \mathbb{P}^1 \right\}$ in $\mathbb{P}^1 \times \mathbb{P}^1$, written in the affine coordinates $(x,y)$ as 
\begin{equation} \label{pencil}
\Gamma_{[\alpha : \beta]} : \quad \alpha \mathbf{x}^T \mathbf{A} \mathbf{y} + \beta \mathbf{x}^T \mathbf{B} \mathbf{y} = \frac{\alpha}{2t}(k^2 - x^2 -y^2 +2 t x y ) + \beta x^2 y^2 = 0,
\end{equation}
where $\mathbf{x}^T = \left(\begin{array}{ccc}x^2 & x & 1\end{array}\right), \mathbf{y}^T = \left(\begin{array}{ccc}y^2 & y & 1\end{array}\right)$. The QRT mapping is defined as follows. A generic point, say given by $(x,y)$, lies on exactly one curve $\Gamma_{[\alpha : \beta]}$ in the pencil. There is then exactly one other point $(\bar{x},y)$ on $\Gamma_{[\alpha : \beta]}$ with the same $y$-coordinate, from which we can define the involution $r_x : (x,y) \mapsto (\bar{x},y)$. Similarly we have another involution $r_y : (x,y) \mapsto (x,\bar{y})$, and their composition $r_x \circ r_y$ is the QRT mapping. Following \cite{CDT2017} we introduce the involution $\sigma_{xy} : (x,y) \mapsto (y,x)$ and work with the map $\varphi = \sigma_{xy} \cdot r_y$, which for the pencil \eqref{pencil} is precisely \eqref{QRTphi}, and can be thought of as a `half QRT mapping' due to the fact that $\varphi^2 = r_x \circ r_y$. The pencil \eqref{pencil} has four basepoints, given in coordinates by 
\begin{equation}
p_1 : (x,y) = (k, 0),\quad p_2 : (x,y) = (-k, 0),\quad p_3 : (x,y) = (0, k),\quad p_4 : (x,y) = (0, -k).
\end{equation}
Blowing these up, we denote the blow-up projection by 
\begin{equation*}
\pi_1 : \operatorname{Bl}_{p_1,p_2,p_3,p_4}(\mathbb{P}^1 \times\mathbb{P}^1) \rightarrow \mathbb{P}^1 \times \mathbb{P}^1,
\end{equation*} 
and denote the exceptional curves by $\pi_2^{-1} (p_i) = E_i$ for $i=1,2,3,4$. The proper transform of the pencil under $\pi_2$ still has four basepoints $p_5 \in E_1$, $p_6 \in E_2$, $p_7 \in E_3$, $p_8 \in E_4$, after the blow-ups of which the proper transform of the pencil is basepoint-free and we obtain a rational elliptic surface $\mathcal{X}$. Denote the projection under the second four blow-ups by 
\begin{equation*}
\pi_2 : \mathcal{X} \rightarrow  \operatorname{Bl}_{p_1,p_2,p_3,p_4}(\mathbb{P}^1 \times\mathbb{P}^1),
\end{equation*} 
and the exceptional curves by $\pi_2^{-1} (p_i) = E_i$ for $i=5,6,7,8$. Composing the projections we obtain
\begin{equation*}
\pi = \pi_2 \circ \pi_1: \mathcal{X} \rightarrow \mathbb{P}^1 \times \mathbb{P}^1,
\end{equation*} 
and $\mathcal{X}$ is a rational surface fibred by the proper transform of the pencil. Under $\pi$, we have the preimage of each basepoint $p_1,\dots, p_4$ given by the union of two irreducible curves:
\begin{align*}
&\pi^{-1}(p_1) = (E_1 - E_5) \cup E_5,\quad \quad  \pi^{-1}(p_2)= (E_2 - E_6) \cup E_6, \\
&\pi^{-1}(p_1)= (E_3 - E_7) \cup E_7,\quad \quad \pi^{-1}(p_4)= (E_4 - E_8) \cup E_8,
\end{align*}
where we have used the usual notation for divisors to denote by $E_1-E_5$ the proper transform of $E_1$ under $\pi_2$, and so on, which we illustrate in \autoref{blow-ups}.
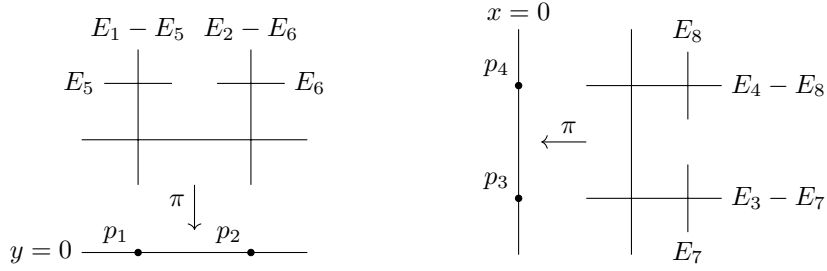
\begin{figure} 
\begin{center}
\begin{tikzpicture}[scale=.3]
\draw (9,-4) node[left]{$y=0$} -- (19, -4) 
(9,1) -- (19,1)
(11.5,-1) -- (11.5, 5) node[above]{$E_1-E_5$}
(16.5,-1) -- (16.5, 5) node[above]{$E_2-E_6$}
(10,3.5) node[left]{$E_5$} -- (13,3.5)
(15,3.5)  -- (18,3.5)node[right]{$E_6$} ;
\draw[->]
(14,-1) node[below left]{$\pi$}-- (14,-3) ;
\filldraw 
(11.5,-4) circle (4pt) node[above left]{$p_1$}
(16.5,-4) circle (4pt) node[above left]{$p_2$};
\end{tikzpicture}
\quad \quad \quad \quad \quad 
\begin{tikzpicture}[scale=.3]
\draw 
(23,-5) -- (23, 5) node[above]{$x=0$}
(28,-5) -- (28, 5) 
(26,-2.5) -- (32,-2.5) node[right]{$E_3-E_7$}
(26,2.5) -- (32,2.5) node[right]{$E_4-E_8$}
(30.5,1) -- (30.5,4) node[above]{$E_8$}
(30.5,-1) -- (30.5,-4) node[below]{$E_7$};
\filldraw 
(23,-2.5) circle (4pt) node[above left]{$p_3$}
(23,2.5) circle (4pt) node[above left]{$p_4$};
\draw[->]
(26,0) node[above left]{$\pi$}-- (24,0) ;
\end{tikzpicture}
\end{center}
\caption{Configuration of curves on $\mathcal{X}$ arising from the blow-ups of the basepoints}
\label{blow-ups}
\end{figure}
The iteration mapping \eqref{QRTphi} lifts uniquely under the blow-ups to give a birational map
\begin{equation*}
\tilde{\varphi} : \mathcal{X} \rightarrow \mathcal{X},
\end{equation*}
which is in fact a true isomorphism, and the singularity confinement observed earlier can be understood in terms of this space of initial conditions as follows. Lifted under the blow-ups, the initial data $f_{n-1} \neq 0, f_n = k$ correspond to a point on the proper transform $H_x - E_1$ of the line $x=k$ on $\mathbb{P}^1 \times \mathbb{P}^1$, while the pairs $(f_{n},f_{n+1})=(k,0)$, $(f_{n+1},f_{n+2})=(0,-k)$ correspond to the basepoints $p_3, p_2$ respectively. Further, the recovery of the degree of freedom $(f_{n+2},f_{n+3}) = (-k, f_{n-1})$ corresponds to a one-to-one correspondence between $H_x- E_1$ and $H_y-E_4$ under the iterated mapping $\tilde{\varphi}^3$, as we illustrate in \autoref{singpattern}. 
\begin{figure}[H] 
\begin{center}
\begin{tikzpicture}[scale=.3]
\draw (1, -1) -- node[midway,left]{$~~~~\mathcal{X}~~~~~~~~~~~$}(1,5) node[above]{$H_x-E_1$}
(8,-1) -- (8,5) node[above]{$E_7$} 
(15,-1) -- (15,5) node[above]{$E_6$} 
(22,-1) -- (22,5) node[above]{$H_y-E_4$} ;
\draw (1, -9) node[below]{$x=k$}-- node[midway, left]{$\mathbb{P}^1\times\mathbb{P}^1~~~~~~~~$} (1,-3)
(22, -9) node[below]{$y=-k$}-- (22,-3);
\filldraw 
(8,-8) circle (4pt) node[below]{$p_3$}
(15,-8) circle (4pt) node[below]{$p_2$};
\draw[->]
(3,2) -- node[above,midway]{$\tilde{\varphi}$} (6,2);
\draw[->]
(10,2) -- node[above,midway]{$\tilde{\varphi}$} (13,2);
\draw[->]
(17,2) -- node[above,midway]{$\tilde{\varphi}$} (20,2);

\draw[->]
(3,-8) -- node[above,midway]{$$} (6,-8);
\draw[->]
(10,-8) -- node[above,midway]{$$} (13,-8);
\draw[->]
(17,-8) -- node[above,midway]{$$} (20,-8);

\draw[->]
(8,-3) -- node[left,midway]{$\pi$} (8,-5);
\draw[->]
(15,-3) -- node[left,midway]{$\pi$} (15,-5);
\end{tikzpicture}
\caption{Confined singularity pattern as isomorphisms between exceptional curves}
\label{singpattern}
\end{center}
\end{figure}
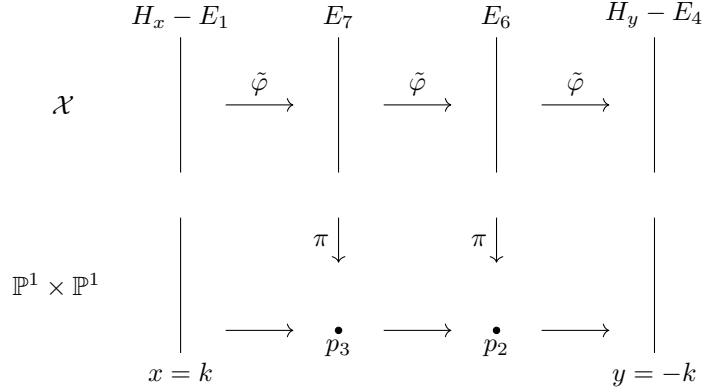
The loss of a degree of freedom when $f_n=\pm k$ can now be understood in terms of curves on $\mathbb{P}^1 \times \mathbb{P}^1$ being blown down to points under the mapping $\varphi$: a codimension one subvariety being blown down to one of codimension two. The recovery of the lost degree of freedom occurs precisely when, while iterating after a blow-down, we arrive at an indeterminacy of the forward iteration map $\varphi$ (in the case of the singularity $f_n = k$, this is $p_2$), so the point is blown back up to a curve. As remarked before, for a generic (non-integrable) system, after a blow-down we will not arrive at an indeterminacy of the forward mapping and the lost degree of freedom will never be recovered. In other words, we cannot lift the mapping to an isomorphism through a finite number of blow-ups. This description of singularity confinement in terms of codimension increasing under the mapping, followed by a return to the same as the generic case, is the main reference point for our geometric formulation of singularity confinement for delay-differential equations in \autoref{section3}. \\ 

Ramani, Grammaticos and collaborators have obtained a plethora of discrete Painlev\'e equations via the process of `deautonomisation by singularity confinement' applied to members of the QRT family. This involves considering non-autonomous generalisations of a given QRT map by introducing $n$-dependence into the coefficients of the mapping, then isolating examples for which the singularity confinement behaviour persists. The definitive framework for discrete Painlev\'e equations was provided in a seminal paper by H. Sakai \cite{SAKAI2001}. Sakai defined a class of rational surfaces generalising both those associated with differential Painlev\'e equations via Okamoto's space and the rational elliptic surfaces giving spaces of initial conditions for QRT mappings. Certain surfaces from this class come in families that admit actions of extended affine Weyl groups by birational transformations, with translation elements defining discrete Painlev\'e equations. The theory of Sakai has had a huge impact on both the general theory of discrete integrable systems, as well as on the applications in which they arise. While this theory provides a classification scheme for discrete Painlev\'e equations in terms of the surfaces they are associated with, it has also led to a suite of geometric tools for their analysis (see \cite{KNY2017} and numerous references within), which are invaluable in cases where a discrete system from an applied problem fits into the discrete Painlev\'e framework \cite{hypergeometric}. Sakai's construction recovers many of the examples obtained by singularity confinement methods, but we make an important remark here that lifting to isomorphisms under a finite number of blow-ups is not sufficient for integrability, and the geometry of the space of initial conditions plays a defining role. In particular, an example given by Hietarinta and Viallet \cite{HIETARINTAVIALLET} admits a space of initial conditions but exhibits exponential degree growth, which was explained in terms of its geometry by Takenawa \cite{TAKENAWA2001}. It has since been shown \cite{MASE} that if a second-order discrete system with the singularity confinement property (in the sense that it admits a space of initial conditions) is nontrivially integrable (in the sense of quadratic degree growth), then it must arise from the surfaces defined by Sakai. \\

As mentioned previously, the theory of delay-differential Painlev\'e equations is in its infancy compared to the differential and discrete cases, but there is already a body of evidence showing its promise, which we hope to add to with this work. Delay-differential equations of the kind we consider arise in a range of fields of applied mathematics, most notably in mathematical biology, for example as equations for steady states of systems of partial differential equations with a spatial delay \cite{HUGH}. Thus the possibility of a geometric framework for Painlev\'e equations in the delay-differential class is an exciting prospect not only for the theory of Painlev\'e equations itself, but for widening the range of equations whose integrability can be exploited in applications. 

\subsection{Outline of the paper}

We will begin our analysis working on the level of equations, without invoking geometric language. In \autoref{section2} we recall previous observations of singularity confinement behaviour in the three equations, and extend them to include infinite families of confined singularity patterns in each case. The proofs of these are deferred to the appendix. In section 3 we shift to the geometric setting, first recasting our equations as mappings between jet spaces and defining `blow-down type' singularities, and propose a notion of confinement for them. Rephrased in these geometric terms, we use the results of Section 2 to show that in the three examples, all such singularities are, in the sense of our definition, confined. We conclude with a discussion of how the geometric framework and the techniques developed for proving the singularity confinement property may be utilised and built upon in the study of other examples, as well as some open questions that arise from our work.

\section{Singularity analysis of delay-differential equations} \label{section2}

We begin by recalling previous observations of singularity confinement phenomena in the three examples we consider. Beginning with equation \eqref{dHKdV}, the forward iteration, which gives $\bar{v}$ in terms of $v, v'$ and $\ubar{v}$ is given by
\begin{equation} \label{vbar}
\bar{v} = \ubar{v} + p\frac{1}{v} + q \frac{v'}{v^2},
\end{equation}
so if we take, as initial data, a pair of Laurent series expansions of $v, \ubar{v}$ about $z=z_0$, then \eqref{vbar} and its upshifts determine all subsequent iterates $\bar{v}, \bar{\bar{v}}, \dots$ as Laurent series about $z_0$. If we only wish to iterate a finite number of steps forward from generic initial data, we need only finitely many coefficients. For example, we could begin by giving initial $\ubar{v}, v$ as Taylor expansions in $\zeta = z- z_0$ about some $z=z_0$:
\begin{subequations} \label{jetsv}
\begin{align}
\ubar{v} &= \ubar{a}_0 + \ubar{a}_1 \zeta + \ubar{a}_2 \zeta^2+  \dots , \label{uunderexpansion} \\
v &= a_0 + a_1 \zeta + a_2 \zeta^2 + \dots.  \label{uexpansion}
\end{align}
\end{subequations}
If we assume that the iterates $\bar{v}(z) = u(z+1), \bar{\bar{v}}(z) = v(z+2), \dots, v^{(k)}(z) = v(z + k)$ are all regular and nonzero at $z_0$, it is clear from the form of the equation \eqref{vbar} that the value $v(z_0 + k)$ depends only on the following coefficients from the expansions \eqref{uunderexpansion}, \eqref{uexpansion}:
\begin{equation}
\left(\begin{array}{ccccc}\ubar{a}_0 & \ubar{a}_1 & \dots & \ubar{a}_{k-1} & \\a_0 & a_1 & \dots & a_{k-1} & a_k \end{array}\right).
\end{equation}
We will be iterating systems arbitrarily many times forward, so we will use this kind of notation for the iterates, i.e. $v^{(k)}(z) = v(z+k)$, throughout the remainder of the paper. Further, the form of the right-hand side of the forward iteration \eqref{vbar} ensures that if we start from $\ubar{v}, v$ given by Taylor series, the only way that a pole may develop is through some iterate having a zero first. If while iterating, some iterate $v$ develops a zero of order one, say at $\zeta = z-z_0= 0$, with
\begin{subequations}
\begin{align}
\ubar{v} &= \ubar{a}_0 + \ubar{a}_1 \zeta + \dots, \\
v &= a_1 \zeta + a_2 \zeta^2 + \dots,
\end{align}
\end{subequations}
where $a_1 \neq 0$, then we have by direct calculation that
\begin{subequations}
\begin{align}
\bar{v} &= - \frac{q}{a_1} \zeta^{-2} + \mathcal{O}(\zeta^{-1}), \\
\bar{\bar{v}} &= - a_1 \zeta^{1} + \mathcal{O}(\zeta^2), \\
\bar{\bar{\bar{v}}} &= \left(5 \ubar{a}_0+\frac{7 p^2}{q a_1}+\frac{2 p a_2}{a_1^2}-\frac{4 q a_2^2}{a_1^3}+\frac{6 q a_3}{a_1^2}\right) + \mathcal{O}(\zeta).
\end{align}
\end{subequations}
We summarise the observations above by saying that the equation \eqref{dHKdV} admits the singularity pattern
\begin{equation*}
\left( \operatorname{rg} , 0^1, \infty^{2} , 0^1, \operatorname{rg} \right),
\end{equation*}
where $\operatorname{rg}$ indicates a regular iterate with generic coefficients. We note here that this behaviour is exceptional for the following reason. In the computation of $\bar{\bar{v}}$ here, it is natural to expect a zero of order one, as this is what happens generically when $v$ and $\bar{v}$ are of order $\zeta, \zeta^{-2}$ respectively. However, while $\bar{v}, \bar{\bar{v}}$ having orders $\zeta^{-2}, \zeta^1$ respectively would generically lead to $\bar{\bar{\bar{v}}}$ having another pole of order $2$, in this singularity pattern we note that two terms have vanished as $\bar{\bar{\bar{v}}}$ regains regularity. In the language of previous studies of singularity confinement behaviour, the information lost when entering the singularity is recovered in the iterate $\bar{\bar{\bar{v}}}$, in the form of the coefficient $\ubar{a}_0$ from the initial data. Though this behaviour has not, to our knowledge, been reported explicitly, we note that the equation \eqref{dHKdV} may be obtained by singularity confinement tests along the lines of \cite{RGMOREIRA,SCLOWDEGREE}.\\

We next consider equation \eqref{ddP1}, which was first observed in \cite{QCS} to exhibit the following singularity confinement behaviour. The forward iteration is given by
\begin{equation} \label{ddP1fwd}
\bar{u} = \ubar{u} + a - b \frac{u'}{u},
\end{equation}
so again it is clear that the only way that a pole may develop while iterating from formal Taylor series is following a zero. Suppose that while iterating, the solution $u$ develops a zero of order one at $\zeta = z-z_0 = 0$, so
\begin{subequations}
\begin{align}
\ubar{u} &= \ubar{c}_0 + \ubar{c}_1 \zeta + \dots, \\
u &= c_1 \zeta + c_2 \zeta^2 + \dots,
\end{align}
\end{subequations}
where $c_1 \neq 0$. Then direct calculation shows that 
\begin{subequations}
\begin{align}
u^{(1)} &= - \frac{b}{\zeta} + \left(a + \ubar{c}_0 - b \frac{c_2}{c_1} \right) + \mathcal{O}(\zeta), \\
u^{(2)} &=  \frac{b}{\zeta} +  \left(2a + \ubar{c}_0 - b \frac{c_2}{c_1} \right) + \mathcal{O}(\zeta), \\
u^{(3)} &= \left(\frac{2 a^2}{b} -\frac{\ubar{c}_0^2}{b}+\frac{2 c_2 \ubar{c}_0}{c_1}-3 \ubar{c}_1+\frac{2 b \left(3 c_1 c_3-2 c_2^2\right)}{c_1^2}-2 c_1 \right)\zeta  + \mathcal{O}(\zeta^2), \\
u^{(4)} &= F(\ubar{c}_0, \ubar{c}_1, \ubar{c}_2, c_1, c_2, c_3, c_4 ) + \mathcal{O}(\zeta),
\end{align}
\end{subequations}
where $F$ is a known rational function of the generic initial data, which we omit for conciseness. Again, this behaviour is exceptional as $u^{(1)}, u^{(2)}$ both having simple poles would generically lead to $u^{(3)}$ also having a simple pole, but here two terms have vanished as $u^{(3)}$ instead has a zero of order one, so equation \eqref{ddP1} admits the singularity pattern 
\begin{equation*}
\left( \operatorname{rg}, 0^{1} , \infty^{1}, \infty^{1}, 0^{1}, \operatorname{rg} \right).
\end{equation*}
We next turn to equation \eqref{eq20}, which was obtained in \cite{RGMOREIRA} by singularity confinement tests, though details were not given explicitly. The forward iteration mapping is given by 
\begin{equation} \label{eq20fwd}
\bar{w} =   \ubar{w} \left( \lambda z  + \alpha \frac{ w'}{w}\right).
\end{equation}
Say, while iterating, we arrive at a pair $\ubar{\ubar{w}}, \ubar{w}$ given by expansions in $\zeta = z-z_0$ by 
\begin{subequations}
\begin{align}
\ubar{\ubar{w}} &= \ubar{\ubar{c}}_0 + \ubar{\ubar{c}}_1 \zeta +\ubar{\ubar{c}}_2 \zeta^2+ \dots, \\
\ubar{w} &= \ubar{c}_0 + \ubar{c}_1 \zeta +\ubar{c}_2 \zeta^2 + \dots,
\end{align}
\end{subequations}
with 
\begin{equation}
\alpha \ubar{c}_1+\lambda(z_0 - 1) \ubar{c}_0 = 0, \quad 2 \alpha  \ubar{c}_2 + \lambda  \ubar{c}_1 (z_0-1) \neq 0, \quad\ubar{c}_1 \neq 0, \quad \ubar{\ubar{c}}_0 \neq 0.
\end{equation}
This means that $w$ will have a simple zero at $z=z_0$, and by direct calculation we find the following:
\begin{subequations}
\begin{align}
w &= \frac{\lambda  \ubar{\ubar{c}}_0 (1-z_0)  \left(2 \alpha  \ubar{c}_2 + \lambda  \ubar{c}_1 (z_0-1)\right)}{\alpha  \ubar{c}_1} \zeta + \mathcal{O}(\zeta^2), \\
\bar{w} &= \frac{\alpha^2 \ubar{c}_1}{\lambda (1-z_0)} \zeta^{-1} + \mathcal{O}(\zeta^0)\\
\bar{\bar{w}} &=  \frac{\lambda  \ubar{\ubar{c}}_0 (z_0-1)  \left(2 \alpha  \ubar{c}_2 + \lambda  \ubar{c}_1 (z_0-1)\right)}{  \ubar{c}_1} + \mathcal{O}(\zeta^1), \\
\bar{\bar{\bar{w}}} &= \frac{ G(\ubar{\ubar{c}}_0,\ubar{\ubar{c}}_1, \ubar{\ubar{c}}_2, \ubar{c}_1,  \ubar{c}_2, \ubar{c}_3) }{\ubar{\ubar{c}}_0^2 \left(2 \alpha  \ubar{c}_2 + \lambda  \ubar{c}_1 (z_0-1) \right)^2}+ \mathcal{O}(\zeta^1),
\end{align}
\end{subequations}
where $G$ is a polynomial function of the generic initial data as well as $z_0$. Again, this behaviour is exceptional as a simple pole of $\bar{w}$ with $\bar{\bar{w}}$ regular and nonzero would generically lead to $\bar{\bar{\bar{w}}}$ having another simple pole, whereas in this case a term has vanished and the iterate $\bar{\bar{\bar{w}}}$ is regular. Again, we summarise this observation by saying that the equation \eqref{eq20} admits the singularity pattern
\begin{equation*}
\left( \operatorname{rg}, \ubar{\zeta}_0^1,  0^{1} , \infty^{1}, \bar{\bar{\zeta}}_0^1 , \operatorname{rg} \right),
\end{equation*}
where $\ubar{\zeta}_0^{(1)}$ indicates that the iterate $\ubar{w}$ satisfies the condition for $w$ to develop a simple zero, namely $\alpha \ubar{c}_1+\lambda(z_0 - 1) \ubar{c}_0 = 0, \quad 2 \alpha  \ubar{c}_2 + \lambda  \ubar{c}_1 (z_0-1) \neq 0$, and $\bar{\bar{\zeta}}_0^1$ indicates the iterate $\bar{\bar{w}} = \bar{\bar{c}}_0 +  \bar{\bar{c}}_1 \zeta +  \bar{\bar{c}}_2 \zeta^2 + \dots $ satisfies $ \alpha  \bar{\bar{c}}_1 + \lambda(z_0+2) \bar{\bar{c}}_0 = 0$.  

\subsection{Infinite families of singularity patterns} \label{infinitefamilies}

In the previous section, we outlined certain singularity patterns admitted by the equations \eqref{ddP1}, \eqref{dHKdV} and \eqref{eq20} which involved zeroes of order one developing while iterating the systems. We now extend these observations to higher order zeroes, and show that each of the equations admits an infinite family of singularity patterns with similar confinement behaviour.\\

For equation \eqref{dHKdV}, we have observed the singularity pattern $\left( \operatorname{rg} , 0^1, \infty^{-2} , 0^1, \operatorname{rg} \right)$, which corresponds to $\underline{v}$ being regular and $v$ having a zero of order one at $z = z_0$. Similarly, if $v$ has a zero of order two, then we pass through the following sequence of orders, which is generic until three terms vanish as $v^{(5)}$ becomes regular instead of a pole (with leading coefficient depending on data from $\ubar{v}$):
\begin{equation*}
\ubar{v} = \mathcal{O}(\zeta^0), \quad v \sim \zeta^2, \quad v^{(1)} \sim \zeta^{-3}, \quad v^{(2)} \sim \zeta^{2}, \quad v^{(3)} \sim \zeta^{-3}, \quad v^{(4)} \sim \zeta^{2}, \quad v^{(5)} = \mathcal{O}(\zeta^{0}).
\end{equation*}
From above, we see that equation \eqref{dHKdV} admits the singularity pattern
\begin{equation*}
\left( \operatorname{rg} , 0^2, \infty^{3} , 0^2, \infty^{3}, 0^2,  \operatorname{rg} \right),
\end{equation*}
and because of the return to regularity and the iterate $v^{(5)}$ depending on the generic initial data from $\ubar{v}$, the singularity is confined in a similar sense to that which we observed in the case of a zero of order one. More generally, if $v$ has a zero of order $m>1$, and $\ubar{v}$ is regular, say $v = c_m \zeta^m + \mathcal{O}(\zeta^{m+1})$, with $c_m \neq 0$, and $\ubar{v} = \mathcal{O}(1)$, then it can be seen from the equation \eqref{dHKdV} that
\begin{subequations} 
\begin{align}
v^{(1)} &=  -\frac{m q}{ c_m} \zeta^{-m-1} + \mathcal{O}(\zeta^{-m}), \\
v^{(2)} &= - \frac{c_m}{m} \zeta^{m} + \mathcal{O}(\zeta^{m+1}), \\
v^{(3)} &=  \frac{m(m-1) q}{c_m} \zeta^{-m-1} + \mathcal{O}(\zeta^{-m}),
\end{align}
\end{subequations} 
and more generally, it can be shown by induction that for $k \leq m$, 
\begin{subequations}\label{uiterates}
\begin{align} 
v^{(2k)} &=\frac{(-1)^k k! }{\prod_{i=0}^{k-1} (m-i)} c_m \zeta^{m} + \mathcal{O}(\zeta^{m+1}), \\
v^{(2k+1)} &= \frac{(-1)^k \prod_{i=0}^{k}(m-i)}{k!} \frac{q}{c_m} \zeta^{-m-1} + \mathcal{O}(\zeta^{-m}).
\end{align}
\end{subequations}
What we deduce from this is that a singularity sequence beginning with $\ubar{v}$ regular and $v$ with a zero order $m$ will contain a sequence of $m+1$ zeroes of order $m$ alternating with $m$ poles of order $m+1$. We know that the coefficient of $\zeta^{-m-1}$ in the iterate $v^{(2m+1)}$ will vanish according to the formulae \eqref{uiterates}, but it turns out that the entire singular part of the expansion vanishes, so regularity is regained at the iterate $v^{(2m+1)}$.

\begin{theorem}\label{theoremeq2}
For each integer $m>0$, equation \eqref{dHKdV} admits the singularity pattern
\begin{equation}
\left( \operatorname{rg}, 0^m , \infty^{m+1}, 0^m, \infty^{m+1}, \dots,  \infty^{m+1},0^m ,  \infty^{m+1}, 0^m, \operatorname{rg} \right),
\end{equation}
which includes $m+1$ zeroes of order $m$ alternating with $m$ poles of order $m+1$. 
\end{theorem}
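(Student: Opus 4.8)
The plan is to reduce the confinement statement to the regularity of a single explicit sum, and then to establish that regularity by a finite-difference cancellation. Writing $\partial_\zeta = d/d\zeta$ and using $v'/v^2 = -(1/v)'$, the forward iteration \eqref{vbar} becomes the clean telescoping relation
\begin{equation*}
v^{(k+1)} - v^{(k-1)} = \left(p - q\,\partial_\zeta\right)\frac{1}{v^{(k)}}.
\end{equation*}
Summing this over the even iterates $v^{(0)}, v^{(2)}, \dots, v^{(2k)}$ and using $v^{(-1)} = \ubar{v}$ gives
\begin{equation*}
v^{(2k+1)} = \ubar{v} + \left(p - q\,\partial_\zeta\right) S_k, \qquad S_k := \sum_{j=0}^{k} \frac{1}{v^{(2j)}}.
\end{equation*}
Since $\ubar{v}$ is regular and, for $q \neq 0$, the operator $p - q\,\partial_\zeta$ sends a Laurent tail with a pole of order $n\ge 1$ to one with a pole of order exactly $n+1$, the iterate $v^{(2m+1)}$ is regular \emph{if and only if} $S_m$ is regular at $\zeta = 0$. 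This reduces the theorem to showing that the principal part of $S_m = \sum_{j=0}^m 1/v^{(2j)}$ vanishes identically.

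First I would record the leading-order behaviour. By the induction already indicated for \eqref{uiterates}, each even iterate $v^{(2j)}$ (for $0 \le j \le m$) has a zero of order $m$ with leading coefficient $\alpha_j = (-1)^j c_m / \binom{m}{j}$, so that $1/v^{(2j)} = \big((-1)^j \binom{m}{j}/c_m\big)\zeta^{-m} + \mathcal{O}(\zeta^{-m+1})$. The coefficient of $\zeta^{-m}$ in $S_m$ is therefore $\tfrac{1}{c_m}\sum_{j=0}^m (-1)^j \binom{m}{j} = 0$; the same computation for the partial sums, via $\sum_{j=0}^k (-1)^j\binom{m}{j} = (-1)^k\binom{m-1}{k}$, reproduces the leading pole coefficient of $v^{(2k+1)}$ in \eqref{uiterates} and shows it is nonzero for $k<m$ but vanishes at $k = m$ (as $\binom{m-1}{m}=0$), confirming the alternating block of $m+1$ zeroes of order $m$ and $m$ poles of order $m+1$.

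The crux is the cancellation of the remaining principal-part coefficients of $S_m$, namely those of $\zeta^{-m+l}$ for $1 \le l \le m-1$. The key structural claim I would prove, by induction on the Taylor order, is that the coefficient of $\zeta^{-m+l}$ in $1/v^{(2j)}$ has the form $(-1)^j\binom{m}{j}\,Q_l(j)$, where $Q_l$ is a polynomial in $j$ of degree at most $l$ whose coefficients depend on the initial data but not on $j$. Granting this, each principal-part coefficient of $S_m$ is a combination of the sums $\sum_{j=0}^m (-1)^j\binom{m}{j} j^r$ with $0 \le r \le l \le m-1 < m$, all of which vanish as they are $m$-th finite differences of polynomials of degree below $m$. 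Hence the entire principal part of $S_m$ vanishes, $S_m$ is regular, and so is $v^{(2m+1)}$.

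The main obstacle is establishing this polynomial-in-$j$ structure of the principal-part coefficients: equivalently, that the normalised low-order Taylor coefficients of $v^{(2j)}$ depend on $j$ as polynomials of controlled degree, a fact that must be propagated through the coupled recurrence relating even and odd iterates. At each order the derivative term $-q\,\partial_\zeta$ and the regular contribution from $\ubar{v}$ feed new $j$-dependence into the subleading coefficients, and one must check that the degree in $j$ grows by at most one per Taylor order. The bookkeeping for this propagation, rather than any single identity, is where the real work lies; the finite-difference identities then deliver the cancellation mechanically. I expect the same strategy---telescoping to reduce confinement to the regularity of a reciprocal sum, followed by a finite-difference cancellation---to apply to the analogous patterns for \eqref{ddP1} and \eqref{eq20}.
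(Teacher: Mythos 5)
Your telescoping reduction $v^{(2k+1)} = \ubar{v} + (p - q\,\partial_\zeta)S_k$ with $S_k = \sum_{j=0}^k 1/v^{(2j)}$ is correct and clean, and the finite-difference identities $\sum_{j=0}^m(-1)^j\binom{m}{j}j^r = 0$ for $r<m$ would indeed force the vanishing of the whole principal part of $S_m$ \emph{if} your structural claim held. But that claim --- that the coefficient of $\zeta^{-m+l}$ in $1/v^{(2j)}$ equals $(-1)^j\binom{m}{j}Q_l(j)$ with $Q_l$ polynomial in $j$ of degree at most $l$ --- is essentially the entire content of the theorem, and you have only named it, not proved it. The obstacle is that the even and odd iterates are coupled: the subleading coefficients of $v^{(2j+2)}$ are fed by $(p-q\,\partial_\zeta)(1/v^{(2j+1)})$, where $v^{(2j+1)} = \ubar{v} + (p - q\,\partial_\zeta)S_j$ involves the partial sums $S_j$. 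A complete induction must therefore simultaneously control (i) the degree in $j$ of the normalised Taylor coefficients of the even iterates, (ii) the corresponding structure of the reciprocals of the odd iterates, and (iii) the fact that partial sums $\sum_{i=0}^j(-1)^i\binom{m}{i}\,(\text{polynomial of degree } d \text{ in } i)$ are again of the form $(-1)^j\binom{m-1}{j}\,(\text{polynomial of degree } d \text{ in } j)$ --- a sub-lemma that requires its own proof and does not follow from the terminal identity you quote. As evidence that this bookkeeping is not routine, the analogous statement for equation \eqref{ddP1} occupies all of the paper's Appendix A.1: the polynomial ansatz $f_n^{(k)} = kP_n^{(k)}$, $g_n^{(k)} = (k-m)Q_n^{(k)}$ is verified order by order through a $(2n+1)\times(2n+1)$ linear system whose determinant $(n!)^2(m-n)_n$ must be shown to be nonzero for $n\leq m-1$. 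Until you carry out the corresponding induction for \eqref{dHKdV}, what you have is a strategy rather than a proof.

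You should also know that the paper proves this particular theorem by an entirely different and much shorter route: the Miura-type transformation $u = \ubar{v}\,v$ sends solutions of \eqref{dHKdV} to solutions of \eqref{ddP1} (with $a=2p$, $b=-q$), so the pattern is read off from the already-established pattern for \eqref{ddP1} via $v^{(k-1)}v^{(k)} = u^{(k)}$; the hard recurrence analysis is done once, for \eqref{ddP1} only. Your direct approach, if completed, would be a genuine alternative that avoids the transformation, and your cancellation mechanism (vanishing of $m$-th finite differences of polynomials of degree below $m$) is arguably a more transparent explanation of the confinement than the paper's factor $(k-m)$ in its ansatz. But as written, the decisive lemma is missing.
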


The proof of this theorem is provided in the appendix, along with those of similar results for the equations \eqref{ddP1} and \eqref{eq20}:
\begin{theorem} \label{theoremeq1}
For each integer $m > 0$, equation \eqref{ddP1} admits the singularity pattern
\begin{equation}
\left( \operatorname{rg}, 0^m , \infty^1_{-m}, \infty^1_{1}, \infty^{1}_{1-m} , \dots, \infty^1_{k}, \infty^1_{k-m} ,\dots , \infty^1_{-1} , \infty^1_{m}, 0^m, \operatorname{rg} \right),
\end{equation}
which includes $2m$ simple poles with residues alternating between positive and negative multiples of $\beta$, which we denote 
\begin{equation}
\infty^1_j = \frac{j \beta}{z-z_0} + O(1).
\end{equation}
\end{theorem}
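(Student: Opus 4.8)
The plan is to iterate the forward map \eqref{ddP1fwd}, written as
\[
u^{(n+1)} - u^{(n-1)} = a - b\,(\log u^{(n)})',
\]
starting from initial data in which $\ubar{u} = u^{(-1)}$ is a generic regular Taylor series and $u = u^{(0)}$ has a zero of order exactly $m$ at $\zeta = z - z_0 = 0$. The single structural observation that drives everything is that if a function has a zero or pole of integer order $o$ at $\zeta = 0$, then its logarithmic derivative $(\log u^{(n)})'$ has singular part exactly $o/\zeta$; that is, it is at worst a \emph{simple} pole, with residue equal to the order of vanishing (negative for a pole). Since the only singular contributions to $u^{(n+1)}$ come from this logarithmic derivative together with any pole already present in $u^{(n-1)}$, an immediate induction shows that every iterate produced from our initial data is at worst a simple pole. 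This is the feature that distinguishes \eqref{ddP1} from \eqref{dHKdV}, where the term $v'/v^2$ produces poles of increasing order.

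First I would telescope the recurrence along its even and odd subsequences, obtaining
\[
u^{(2k+1)} = \ubar{u} + (k+1)a - b\sum_{j=0}^{k}(\log u^{(2j)})', \qquad u^{(2k)} = u + k a - b\sum_{j=1}^{k}(\log u^{(2j-1)})'.
\]
In these forms the regular terms $\ubar{u}$, $u$ and the multiples of $a$ carry no singular part, so the $1/\zeta$ coefficient of each iterate is read off directly from the accumulated logarithmic derivatives: using that each intermediate iterate is a simple pole (order $-1$) while $u^{(0)}$ has order $m$, the residue of $u^{(2k)}$ is $kb$ and that of $u^{(2k+1)}$ is $(k-m)b$. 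This is exactly the alternating pattern $\infty^1_{-m},\infty^1_1,\infty^1_{1-m},\dots,\infty^1_{-1},\infty^1_m$ claimed, with $\beta = b$. An induction over $n = 1,\dots,2m$ then establishes the orders, the point at each step being to verify that the simple-pole coefficient is genuinely nonzero, which holds since $b \neq 0$ and the indices $k$ (for $1\le k\le m$) and $k-m$ (for $0\le k\le m-1$) do not vanish in the relevant ranges; hence no iterate is accidentally regular before the confinement step.

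The confinement itself is the substantive part, and the main obstacle. When $k = m$ the residue $(k-m)b$ of $u^{(2m+1)}$ vanishes, and because the singular part of every iterate is \emph{purely} a simple pole, the vanishing of this one coefficient already forces $u^{(2m+1)}$ to be regular, so the pole is confined. What requires real work is to upgrade ``regular'' to ``a zero of order exactly $m$'', matching the entry $0^m$ in the pattern. For $m=1$ this is the cancellation (already visible in the excerpt) by which the constant term of $u^{(3)}$ vanishes; for general $m$ one must show the Taylor coefficients of $u^{(2m+1)}$ vanish up to order $\zeta^{m-1}$ while the $\zeta^m$ coefficient survives. To prove this I would strengthen the inductive hypothesis to carry the leading Laurent coefficients of each iterate in closed form, in the spirit of the formulae \eqref{uiterates} for equation \eqref{dHKdV}, so that the telescoped expression for $u^{(2m+1)}$ can be expanded to the required order and the cancellations exhibited. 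A useful cross-check, and a possible shortcut to avoid grinding the strengthened induction, is the symmetry sending the sequence to its reverse under $(a,b)\mapsto(-a,-b)$, which interchanges the forward and backward iterations of \eqref{ddP1} and makes the appearance of a second zero of order $m$ mirroring $u^{(0)}$ natural; turning this into a clean proof that all intermediate Taylor coefficients cancel is where I expect the effort to concentrate. Finally, expanding the telescoped formula one step further confirms that $u^{(2m+2)}$ is regular and depends on the generic coefficients of $\ubar{u}$, so that the lost degree of freedom is recovered and the singularity is confined.
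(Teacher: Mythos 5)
Your first half is sound and coincides with the paper's own argument: the observation that the logarithmic derivative contributes at worst a simple pole whose residue equals the order of vanishing, the two-step recurrence for the residues that this induces, and the conclusion that $u^{(2k)}$ and $u^{(2k+1)}$ have residues $k\beta$ and $(k-m)\beta$ respectively (with $\beta=b$), so that all intermediate iterates are genuine simple poles and the residue of $u^{(2m+1)}$ vanishes, forcing that iterate to be regular. This is precisely the content of the paper's Proposition~\ref{residues}.

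The gap is the step you yourself flag as the one ``where the effort will concentrate'': proving that $u^{(2m+1)}$ vanishes to order $m$, i.e.\ that the Taylor coefficients $u^{(2m+1)}_0,\dots,u^{(2m+1)}_{m-1}$ all vanish and not merely the residue. This is the substantive content of the theorem, and it is also needed for your last sentence, since the regularity of $u^{(2m+2)}$ requires $(u^{(2m+1)})'/u^{(2m+1)}$ to have residue exactly $m$. Your proposed route --- closed forms for the \emph{leading} Laurent coefficients in the spirit of \eqref{uiterates} --- is not enough, because one must control the first $m+1$ coefficients of every intermediate iterate, not just the leading one. What the paper actually does is set $f^{(k)}_n=u^{(2k)}_n$, $g^{(k)}_n=u^{(2k+1)}_n$, derive for each fixed $n$ a coupled linear recurrence in $k$, and prove by induction on $n$ that the solution has the form $f^{(k)}_n=kP^{(k)}_n$, $g^{(k)}_n=(k-m)Q^{(k)}_n$ with $P^{(k)}_n,Q^{(k)}_n$ polynomial in $k$ of degree at most $n$; it is the factor $(k-m)$ that annihilates all low-order coefficients of $u^{(2m+1)}$ simultaneously. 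The inductive step is itself nontrivial: it reduces to the nonsingularity of a $(2n+1)\times(2n+1)$ linear system whose determinant is $(n!)^2(m-n)_n$, nonzero exactly in the needed range $n\le m-1$. None of this is routine expansion, and the reversal symmetry $(a,b)\mapsto(-a,-b)$ you invoke cannot substitute for it: it shows the pattern is \emph{consistent} with a terminal zero of order $m$, but provides no mechanism forcing the intermediate cancellations, so as written your argument establishes the pattern only up to $(\operatorname{rg},0^m,\infty^1_{-m},\dots,\infty^1_m,\operatorname{rg})$ with the final $0^m$ unproven.
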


\begin{theorem}\label{theoremeq3}
Equation \eqref{eq20} admits the singularity pattern
\begin{equation}
\left( \operatorname{rg},  \zeta_0(-1)^m , 0^m , \infty^{1}, 0^{m-1}, \infty^{2}, \dots, \infty^{j}, 0^{m-j} ,\dots , 0^{2} ,  \infty^{m-1},0^1 ,  \infty^{m},  \zeta_0(2m+2)^m, \operatorname{rg} \right),
\end{equation}
where $\zeta_0(-1)^m$ indicates that the iterate $\ubar{w} = w^{(-1)}$ satisfies $\frac{d^k}{dz^k} \left(\lambda z \ubar{w} + \alpha \ubar{w}' \right) = 0$ at $z=z_0$ for $k=0, ..., m-1$, and $\zeta_0(2m+2)^m$ indicates that the iterate $w^{(2m+2)}$ satisfies $$\frac{d^k}{dz^k} \left(\lambda (z+m) w^{(2m+2)} + \alpha' w^{(2m+2)} \right) = 0,$$ at $z=z_0$ for $k=0, ..., m-1$.
\end{theorem}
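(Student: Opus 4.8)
The plan is to iterate \eqref{eq20fwd} forward from singular initial data while tracking the Laurent expansions of successive iterates about $\zeta = z - z_0 = 0$, exactly in the spirit of the analysis preceding \autoref{theoremeq2}. The basic tool is the shifted forward iteration $w^{(k+1)} = w^{(k-1)}\bigl(\lambda(z+k) + \alpha (w^{(k)})'/w^{(k)}\bigr)$, obtained by replacing $z$ by $z+k$ in \eqref{eq20fwd}. Starting from $w^{(-2)}$ regular and nonzero and $w^{(-1)}$ subject to the order-$m$ vanishing condition $\zeta_0(-1)^m$, I would first record that $w^{(0)} = w^{(-2)}\bigl(\lambda(z-1)w^{(-1)} + \alpha (w^{(-1)})'\bigr)/w^{(-1)}$ has a zero of order exactly $m$, since the denominator and $w^{(-2)}$ are regular and nonzero and the numerator vanishes to order $m$ by hypothesis.

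I would then establish the body of the pattern by induction on the iteration index, writing $d_k$ for the order of $w^{(k)}$ at $z_0$ (positive for a zero, negative for a pole). The mechanism driving the cascade is that if $d_k \neq 0$ then the logarithmic derivative $(w^{(k)})'/w^{(k)}$ contributes a simple pole of residue $d_k$, so the bracket $\lambda(z+k) + \alpha(w^{(k)})'/w^{(k)}$ has leading term $\alpha d_k/\zeta$ and the iteration forces $d_{k+1} = d_{k-1} - 1$, with the leading coefficient of $w^{(k+1)}$ equal to that of $w^{(k-1)}$ times $\alpha d_k$. Starting from $d_{-1}=0$ and the value $d_0 = m$ found above, this yields $d_{2j} = m-j$ and $d_{2j+1} = -(j+1)$ for $0 \le j \le m-1$, i.e. the advertised $m$ zeroes of decreasing order alternating with $m$ poles of increasing order; since $d_k\neq 0$ throughout the interior, no leading coefficient degenerates and the induction is clean. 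To prepare for the confinement step I would carry along not merely the leading coefficients but the full principal parts (and enough subleading regular coefficients) of the $w^{(k)}$, recording how the initial data propagate.

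The crux is the return to regularity. One further application of the iteration past the final pole $w^{(2m-1)} \sim \zeta^{-m}$ gives $w^{(2m)}$ of order $d_{2m} = d_{2m-2}-1 = 0$, hence regular and nonzero. The next step reads $w^{(2m+1)} = w^{(2m-1)}\bigl(\lambda(z+2m)+\alpha (w^{(2m)})'/w^{(2m)}\bigr)$, in which $w^{(2m-1)}$ contributes a pole of order $m$; generically the regular bracket would leave $w^{(2m+1)}$ with a pole of order $m$ and the singularity would propagate. Confinement is the assertion that instead the bracket vanishes to order $m$ at $z_0$ — equivalently $\lambda(z+2m)w^{(2m)}+\alpha(w^{(2m)})' = \mathcal{O}(\zeta^m)$ — which is exactly the terminal $\zeta_0$-type condition of the theorem, so that $w^{(2m+1)}$ is regular and the lost freedom is recovered.

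The main obstacle is precisely this simultaneous vanishing of all $m$ coefficients of the principal part of $w^{(2m+1)}$. Establishing the orders and leading coefficients through the cascade is routine, but the terminal cancellation is not visible at leading order: it requires the full expansion of $w^{(2m)}$ up to order $\zeta^{m-1}$, assembled from the data propagated through every preceding iterate. I would prove it by the same explicit inductive scheme as in \autoref{theoremeq2}: posit a closed form for the singular and low-order regular coefficients of each $w^{(k)}$ and verify that the required combinations cancel, the cancellations being forced by arithmetic factors depending on $m$, in analogy with the product $\prod_{i=0}^{k}(m-i)$ whose vanishing at $k=m$ drives confinement for \eqref{dHKdV} and makes the entire singular part vanish there. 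As a structural guide and independent consistency check I would exploit the reciprocal symmetry of \eqref{eq20}: the substitution $w \mapsto 1/w$, together with reversing the shift direction, sending $\alpha \mapsto -\alpha$, and a relabelling of $z$, preserves the class of equations while interchanging zeroes and poles, so that the terminal return to regularity becomes the mirror image, under the reversed iteration, of the initial formation of the zero — making the automatic satisfaction of the terminal $\zeta_0$ condition the counterpart of the imposed initial one. Once $w^{(2m+1)}$ is shown to be regular, a final direct evaluation confirms it depends on the initial coefficient lost at the onset of the singularity, completing the proof of confinement.
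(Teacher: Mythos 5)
Your strategy---direct iteration of \eqref{eq20fwd}, the order-counting induction $d_{k+1}=d_{k-1}-1$ driven by the residue of the logarithmic derivative, and an explicit coefficient-tracking induction for the terminal cancellation---is sound and would prove the theorem, but it is a genuinely different route from the paper's. The paper derives this result as a corollary of \autoref{theoremeq1} via the Miura-type transformation $u = \bar{w}/\ubar{w}$, which sends solutions of \eqref{eq20} with parameters $\lambda,\alpha$ to solutions of \eqref{ddP1} with $a=2\lambda$, $b=-\alpha$; since the singular initial data for $w$ translate into $(\operatorname{rg},0^m)$ data for $u$, the full pattern of \autoref{theoremeq1} applies to the $u^{(k)}$, and the orders of the $w^{(k)}$ are read off recursively from $w^{(k+1)}/w^{(k-1)}=u^{(k)}$. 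What that buys is precisely the step you correctly single out as the crux and leave unexecuted: the simultaneous vanishing of all $m$ principal-part coefficients at the end of the cascade, which in your scheme would require redoing for \eqref{eq20} the heavy closed-form induction (the linear recurrence system with determinant $(n!)^2(m-n)_n$) that the paper carries out only once, for \eqref{ddP1}. Your approach is self-contained and your reciprocal-symmetry observation is a reasonable consistency check (though note the substitution $w\mapsto 1/w$ with reversed shift flips the sign of $\lambda$ rather than $\alpha$), and your placement of the terminal condition on the regular iterate immediately following $\infty^m$, with bracket $\lambda(z+2m)+\alpha\,(w^{(2m)})'/w^{(2m)}$, is the correct reading of the condition the statement denotes $\zeta_0(2m+2)^m$; but as written the proposal defers the only genuinely hard computation, whereas the transformation makes it unnecessary.
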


\section{Geometric description of singularity confinement} \label{section3}

We now rephrase the results of the previous section geometrically, and propose a characterisation of singularity confinement in the delay-differential setting in terms of the birational geometry of jet spaces. Our guiding principle in developing the theory in parallel with the discrete setting will be that of generic information loss, in particular the ways in which iterating a delay-differential equation may result in a departure from this, and in what sense it is recovered. To explain the motivations for this analogy, we first note that a birational mapping between smooth projective algebraic surfaces is an isomorphism between Zariski open subsets given by the complement of proper subvarieties that are blown down by either the mapping or its inverse. Almost all curves are mapped bijectively to curves, and in this sense no information loss occurs generically while iterating the corresponding discrete system. Singularities of a second-order discrete system occuring when curves are blown down to points may be interpreted as more information loss occurring than normal. The system having the singularity confinement property means that, in such a case when iterating the system results in more than the generic amount of information loss, we may compose the mapping a finite number of times to recover the generic behaviour: an isomorphism from a curve to a curve.\\

We will formulate a concept of generic information loss for our delay-differential equations. In terms of this we will define singularity confinement as being able to, in the case when iterating the system results in more than generic levels of information loss, compose the iteration mapping of the system a finite number of times to recover the generic amount. This concept of generic information loss has two elements: First is the amount of initial data required generically to iterate the system forward a given number of times, which we will phrase in \autoref{jetspaces} in terms of the orders of jet spaces on which the systems give well-defined mappings. Second is the behaviour of subspaces under the these mappings in terms of their codimension, which will be used to describe phenomena analogous to degrees of freedom being lost, which we define as `blow-down type' singularities in \autoref{blow-downsingularities}. We then outline what it means for such a singularity to be confined, and finally verify that this geometric description fits with our analysis of the three examples, and that they confine all singularities in this sense.

\subsection{Delay-differential equations as mapping between jet spaces} \label{jetspaces}

Similarly to how second-order discrete systems are described by birational mappings between algebraic surfaces, we will recast our delay-differential equations as mappings between jet spaces. We consider jets associated with the trivial bundle over $\mathbb{C}$ with fibre $\mathbb{P}^1 \times \mathbb{P}^1$. We use the same coordinate charts for $\mathbb{P}^1 \times \mathbb{P}^1$ as in the discrete case, namely $(x,y), (X,y), (x,Y), (X,Y)$ where $X=1/x, Y=1/y$. The space $J^r_{z_0}$ of $r$-jets about $z_0$ is the set of equivalence classes of local holomorphic sections about some $z_0 \in \mathbb{C}$ under the following equivalence relation. The sections $\sigma_1, \sigma_2$ define the same $r$-jet if, when written in coordinates, their derivatives at $z_0$ coincide up to and including order $r$. \\

We will be always considering jets at $z_0$, so we omit the subscript. We will use coordinates for $J^r$ induced by writing sections as expansions in our coordinates for $\mathbb{P}^1 \times \mathbb{P}^1$. For example, if a section about $z_0$ is visible in the $(x,y)$-chart, it may be written in coordinates as
\begin{equation} \label{section}
\left(\begin{array}{c}x(z) \\ y(z) \end{array}\right) = \left(\begin{array}{c}x_0 + x_1 \zeta + x_2 \zeta^2 + \dots   \\ y_0 + y_1 \zeta + y_2 \zeta^2 + \dots  \end{array}\right),
\end{equation}
where $\zeta = z-z_0$ as before, so we have one part of $J^r$ covered by the chart with coordinates
\begin{equation}
\left(\begin{array}{ccccc}x_0 & x_1 & x_2 & \dots & x_r \\y_0 & y_1 & y_2 & \dots & y_r\end{array}\right),
\end{equation}
and $J^r$ can be thought of as four copies of $\mathbb{C}^{2r+2}$ with coordinates being coefficients from expansions of sections in the  four charts for $\mathbb{P}^1 \times \mathbb{P}^1$, with gluing determined by that of $\mathbb{P}^1 \times \mathbb{P}^1$ itself, namely $X = 1/x, Y=1/y$.\\

Consider a three-point delay-differential equation of the form \eqref{threepoint} given in the introduction, with $l$ being the highest order of derivative that appears. Similarly to how the scalar difference equation \eqref{scalarQRT} is  recast as a QRT mapping on $\mathbb{P}^1 \times \mathbb{P}^1$, we let $(x, y) = (u, \ubar{u})$ and $(\bar{x}, \bar{y}) = (\bar{u}, u)$ given by series expansions about $z_0$, so we have a mapping on sections near $z_0$, which in the $(x,y)$ charts for both domain and target copies of $\mathbb{P}^1 \times \mathbb{P}^1$ is written as:
\begin{equation} \label{sectionmap}
\begin{aligned}
&\quad \quad \quad  \quad \quad \quad \left(\begin{array}{c}x(z) \\ y(z) \end{array}\right) \mapsto \left(\begin{array}{c}\bar{x}(z) \\ \bar{y}(z) \end{array}\right), \\
\bar{x} = & \frac{ f_1 (x, x', \dots, \partial^l x / \partial z^l )+f_2 (x, x', \dots ,\partial^l x / \partial z^l  ) y}{f_3 (x, x', \dots ,\partial^l x / \partial z^l  )+f_4 (x, x', \dots , \partial^l x / \partial z^l  ) y} , \quad \quad \bar{y}= x.
\end{aligned}
\end{equation}

We now introduce a space of jets on which we consider this, corresponding to generic initial data. Consider a section written as a series expansion in one of the four coordinate charts for $\mathbb{P}^1 \times \mathbb{P}^1$, for example \eqref{section} in the $(x,y)$ chart. Denote the numerator and denominator of the function giving $\bar{x}(z)$ in this chart by $P(z), Q(z)$, so for example in the $(x,y)$ chart we use \eqref{sectionmap} and consider
\begin{equation}
\begin{aligned}
P &= f_1 (x, x', \dots, \partial^l x / \partial z^l )+f_2 (x, x', \dots ,\partial^l x / \partial z^l  ) y, \\
Q &= f_3 (x, x', \dots ,\partial^l x / \partial z^l  )+f_4 (x, x', \dots , \partial^l x / \partial z^l  ) y.
\end{aligned}
\end{equation}
Substitute expansions giving  $(x(z), y(z))$ into these, to obtain formal expansions of $P(z), Q(z)$ about $z_0$, which we denote
\begin{equation}
P(z) = P_0 + P_1 \zeta + P_2 \zeta^2+ \dots, \quad \quad Q(z) = Q_0 + Q_1\zeta + Q_2 \zeta^2 + \dots,
\end{equation}
where $P_0, Q_0$ are polynomials in $x_0, \dots, x_l, y_0$ because of the highest order derivative appearing in the equation (or the equivalent for an expansion of a section in another coordinate chart). Consider the rational function $P_0/Q_0$ on $J^{r+l}$, using the transition functions between $x_i, X_i$ etc. being defined by the $\mathbb{P}^1 \times \mathbb{P}^1$ gluing as before, and denote its indeterminacy locus (where the numerator and denominator simultaneously vanish) by $I_1$. We then have a well-defined map 
\begin{equation}
\varphi_r : J^{r+l} \backslash I_1 \rightarrow J^{r}.
\end{equation}
The reason we do not have to worry about indeterminacies of rational functions giving later coefficients in the expansion of $P/Q$ to obtain a well-defined map is the following: All of the rational functions giving expansions of $P/Q$ have denominator being a power of $Q_0$. Similarly, all rational functions giving coefficients in the expansion of $Q/P$ are powers of $P_0$. Thus if $Q_0 = 0$ but $P_0 \neq 0$, we get a well-defined expansion of $Q/P$, in which none of the coefficients have indeterminacies (their denominators cannot vanish as $P_0 \neq 0$) so we have a well-defined a section visible in the $(\bar{X}, \bar{y})$ chart. Similarly, if $P_0 = 0$ but $Q_0 \neq 0$, we get a well-defined expansion of $Q/P$, in which none of the coefficients have indeterminacies (their denominators cannot vanish, as $P_0 \neq 0$). 

\begin{example} 
If we consider the mapping induced by equation \eqref{ddP1} applied to a section visible in the $(x,y)$ chart, written as an expansion \eqref{section}, direct substitution yields
\begin{subequations}
\begin{align}
\bar{x}_0 &= \frac{a x_0 - b x_1 + x_0 y_0}{x_0}, &&\bar{x}_1= \frac{b x_1^2 - 2 x_0 x_2 + x_0^2 y_1}{x_0^2}, &&&\dots \\
\bar{y}_0 &= x_0, &&\bar{y}_1 = x_1, &&&\dots
\end{align}
\end{subequations}
so when $x_0 \neq 0$ we have a section visible in the $(\bar{x},\bar{y})$ chart for the target bundle. Similarly, if we have a section written in the $(X,Y)$ chart as an expansion with coefficients $X_i, Y_i$, we may use the chart $(\bar{x},\bar{Y})$ and calculate
\begin{subequations}
\begin{align}
\bar{x}_0 &=\frac{a X_0 Y_0+b X_1 Y_0+X_0}{X_0 Y_0}, &&\bar{x}_1=\frac{2 b X_2 X_0 Y_0^2-b X_1^2 Y_0^2 - X_0^2 Y_1}{X_0^2 Y_0^2}, &&&\dots \\
\bar{Y}_0 &= X_0, &&\bar{Y}_1 = X_1, &&&\dots
\end{align}
\end{subequations}
so when $X_0 Y_0 \neq 0$ we have a section visible in the $(\bar{x},\bar{Y})$ chart for the target bundle. Calculating in the other charts, we find the subset $I_1 \subset J^{r+1}$ is defined by
\begin{equation}
I_1 = \left\{ (x_0 , x_1) = (0,0) \right\} \cup \left\{ (X_0, X_1) = (0,0) \right\} \cup \left\{ (x_0, Y_0) = (0,0) \right\} \cup \left\{ (X_0, Y_0) = (0,0) \right\}.
\end{equation}
So we have, for each $r \geq 0$, a map
\begin{equation}
\varphi_r : J^{r+1} \backslash I_1 \rightarrow J^{r}.
\end{equation}
We note that the domain $J^{r+1}$ corresponds to the lowest order of jets to which the equation \eqref{ddP1} gives a well-defined map from $J^{r+1} \backslash I_1$ to $J^r$. 
\end{example}
Returning to the general case, we also have, for each $r \geq 0$, a map
\begin{equation} \label{phirk}
\varphi_r^{(k)} = \varphi_{r} \circ \varphi_{r+1} \dots \circ  \varphi_{r+k-1}~~:~~ J^{r+k l} \backslash I_k\rightarrow J^{r},
\end{equation}
defined on the Zariski open subset of $J^{(r+kl)}$ where the numerators and denominators of the rational functions giving leading coefficients of successive iterates do not simultaneously vanish. 
\begin{example}
To illustrate this, in the case of equation \eqref{ddP1} being iterated twice, we obtain in the $(x,y)$ chart rational functions giving $(\bar{\bar{x}}_0, \bar{\bar{y}}_0)$ as 
\begin{subequations} 
\begin{align}
\bar{\bar{x}}_0 &= \frac{a^2 x_0^2-a b x_1 x_0+a x_0^2 y_0+a x_0^3+2 b^2 x_2 x_0-b^2 x_1^2-b x_0^2 y_1-b x_1 x_0^2+x_0^3 y_0}{x_0 \left(a x_0-b x_1+x_0 y_0\right)}, \\
\bar{\bar{y}}_0 &= \frac{a x_0 - b x_1 + x_0 y_0}{x_0}.
\end{align}
\end{subequations} 
Computing the indeterminacy loci of these rational functions in all charts and taking its union with $I_1$, we obtain
\begin{equation}
\begin{aligned}
I_2 &= \left\{ (x_0 , x_1) = (0,0) \right\} \cup \left\{ (X_0, X_1) = (0,0) \right\} \cup \left\{ (x_0, Y_0) = (0,0) \right\} \cup \left\{ (X_0, Y_0) = (0,0) \right\} \cup\\
 & \left\{a x_0 - b x_1 + x_0 y_0 = b x_1^2 - 2b x_0 x_2 + x_0^2 y_1 = 0 \right\} \cup \left\{X_0 = 0, X_1 = -1/b \right\} \cup \left\{ Y_0 = Y_1 = 0 \right\},
\end{aligned}
\end{equation}
and we have a well-defined map 
\begin{equation}
\varphi_r^{(2)} = \varphi_{r} \circ \varphi_{r+1}~~:~~ J^{r+2} \backslash I_2 \rightarrow J^{r}.
\end{equation}
\end{example}

We interpret this map $\varphi_r^{(k)}$ in \eqref{phirk}  on the set specified above as the generic behaviour of the system, and in particular the initial data that is required to iterate the system $k$ times in almost all cases. We now consider the parts of the jet spaces where the rational functions we have considered above have indeterminacies. For example, if we consider a jet in the charts coming from $(X,Y), (\bar{X}, \bar{Y})$, if $(X_0, Y_0) = (0,0)$ then we have
\begin{subequations} \label{poles}
\begin{align}
\bar{X}_0 &=0, &&\bar{X}_1= \frac{Y_1}{1+b Y_1},  \quad \bar{X}_2 = \frac{X_1 \left(Y_2-a Y_1^2\right)-b X_2 Y_1^2}{X_1 \left(1+b Y_1\right)^2},  &&&\dots \\
\bar{Y}_0 &= 0, &&\bar{Y}_1 = X_1, \quad \quad \quad ~~\bar{Y}_2 = X_2, &&&\dots
\end{align}
\end{subequations}
and so on. By direct calculation using formal series expansions, it can be seen that as long as $X_1 \neq 0, 1+b Y_1 \neq 0$, the jet in $(\bar{X},\bar{Y})$ coordinates is determined up to the same order as the one in $(X,Y)$ coordinates. Thus, on the part of $J^r (r \geq1)$ where $(X_0, Y_0) = (0,0)$ but $X_1 \neq 0, 1+b Y_1 \neq 0$, the system induces a mapping $J^{r} \rightarrow J^{r}$ and we have less information loss than in the generic case. Comparing this to the discrete case, we see a parallel to the fact that indeterminacies of the iteration mappings are blown up to curves.

\subsection{Blow-down type singularities} \label{blow-downsingularities}

After considering a concept of generic information loss in terms of the amount of initial data generically required to iterate $k$ times, we turn to parts of jet spaces on which the system induces maps with more information loss. We will refer to these as blow-down type singularities, in parallel with the discrete case where information loss corresponds to curves being blown down under iteration mappings. \\

Consider the mapping $\varphi_r : J^{r+1} \backslash I_1 \rightarrow J^{r}$ induced by equation \eqref{ddP1} derived above. We will be interested in the behaviour under this mapping of subvarieties defined locally by a finite number of algebraic constraints. For most codimension $m$ subsets of this part of $J^{r+1}$ (where $r$ is chosen large enough such that it includes all the variables appearing in the constraints defining the subset), the image under $\varphi_r$ will be of codimension $\leq m$ in $J^{r}$. \\

For example, we can see a variety of behaviours of subspaces as follows. The subspace defined in the $(x_i,y_i)$ chart by the single algebraic constraint $y_i = c$, where $i \leq {r+1}$ and $c\neq0$ is some constant, is of codimension one, and its image under $\varphi_r$ is of codimension zero. Another subspace defined by $x_i = c$, for some $i \leq r$ and $c$ again a nonzero constant, will have image under $\varphi_r$ of codimension one. The codimension two subspace where $(X_0,Y_0)=(0,0)$ with the rest of the coefficients $X_i, Y_i$ generic can be quickly seen from \eqref{poles} to have image again of codimension two.
\begin{definition}
A blow-down type singularity of a delay differential equation of the form \eqref{threepoint} is a codimension $m$ subvariety of $J^{r+l}$ , for some $r\geq 0$, (locally defined as the vanishing locus of a number of polynomials in coordinates introduced above) whose image under the induced map $\varphi_r$ is of codimension greater than $m$.
\end{definition}
We emphasise again that this is in analogy with the discrete setting, where singularities are defined in the sense of an increase in codimension, namely where curves are blown down to points under the iteration mappings. Again we note that in the following examples, $r$ is taken large enough such that $J^{r+1}$ includes all variables appearing in the algebraic constraints defining the blow-down singularities.
\begin{example} \label{exampleeq1sings}
The equation \eqref{ddP1} has a blow-down singularity in $J^{r+1} \backslash I_1$ given in coordinates by $x_0 = 0$ which is of codimension one (with all other $x_i, y_i$ generic) but has image of codimension three in $J^r$, given in coordinates as follows:
\begin{align*}
\left\{ x_0 = 0 \right\} \quad &\rightarrow \quad \left\{ \bar{X}_0 = 0, \quad \bar{X}_1 = -1/b, \quad \bar{y}_ 0 = 0 \right\} \\
\operatorname{codim } 1~~~~ &\rightarrow~~~~ \operatorname{codim } 3
\end{align*} 
Similarly, we see that the development of double and triple zeroes correspond to the following blow-down singularities:
\begin{align*}
\left\{ x_0 =0, \quad x_1 = 0 \right\} \quad &\rightarrow \quad \left\{ \bar{X}_0 = 0, \quad \bar{X}_1 = -1/2b, \quad \bar{y}_ 0 =0, \quad \bar{y}_1 = 0 \right\} \\
\operatorname{codim } 2~~~~ &\rightarrow~~~~ \operatorname{codim } 4
\end{align*} 

\begin{align*}
\left\{ x_0 =0, \quad x_1= 0, \quad x_2 = 0 \right\} \quad &\rightarrow \quad \left\{ \bar{X}_0 = 0, \quad \bar{X}_1 = -1/3b, \quad \bar{y}_ 0 =0, \quad \bar{y}_1 = 0,\quad \bar{y}_ 2 =0 \right\} \\
\operatorname{codim } 3~~~~ &\rightarrow~~~~ \operatorname{codim } 5
\end{align*} 
and more generally the development of a zero of order $m$ corresponds to the following blow-down singularity:
\begin{align*}
\left\{ x_i =0,~~ \forall i = 0, \dots, m-1\right\} \quad &\rightarrow \quad \left\{ \bar{X}_0 = 0, \quad \bar{X}_1 = -1/mb, \quad \bar{y}_i = 0,~~  \forall i = 0, \dots, m-1 \right\} \\
\operatorname{codim } m~~~~ &\rightarrow~~~~ \operatorname{codim }~ (m+2)
\end{align*} 
\end{example}

\begin{example}\label{exampleeq2sings}
The equation \eqref{dHKdV} has a blow-down singularity given in coordinates by $x_0 = 0$ which is of codimension one (with all other $x_i, y_i$ generic) but has image of codimension five given in coordinates as follows:
\begin{align*}
\left\{ x_0 = 0 \right\} \quad &\rightarrow \quad \left\{ \bar{X}_0 = 0, \quad \bar{X}_1 = 0, \quad \bar{y}_ 0 = 0, \quad \bar{y}_1 = q \bar{X}_2, \quad p \bar{y}_1 = - q^2 \bar{X}_3 \right\} \\
\operatorname{codim } 1~~~~ &\rightarrow~~~~ \operatorname{codim } 5
\end{align*} 
We also have a blow-down singularity corresponding to the development of a double zero 
\begin{align*}
\left\{ x_0 =0, \quad x_1 = 0 \right\} \quad &\rightarrow \quad  \left\{\begin{array}{c} \bar{X}_0 = 0, \quad \bar{X}_1 = 0, \quad \bar{X}_2 = 0, \quad\bar{y}_0 = 0, \quad \bar{y}_1 = 0\\  \bar{y}_2 - 2q \bar{X}_3 = 0, \quad \bar{y}_3 - 2p \bar{X}_3 - 4 q \bar{X}_4 =  0, \\
p^2 \bar{X}_3^2+2 p q \bar{X}_3 \bar{X}_4+2 q^2 \bar{X}_4^2-2 q^2 \bar{X}_3 \bar{X}_5 = 0 \end{array}\right\} \\
\operatorname{codim } 2~~~~ &\rightarrow~~~~ \operatorname{codim } 8
\end{align*} 
and more generally the development of a zero of order $m$ corresponds to the following blow-down singularity:
\begin{align*}
\left\{ x_i =0,~~ \forall i = 0, \dots, m-1\right\} \quad &\rightarrow \quad  \left\{\begin{array}{c} \bar{X}_i = 0 ~~ \forall i = 0, \dots, m, \quad \bar{y}_i = 0,~~  \forall i = 0, \dots, m-1\\  F_i( \bar{X}_{m+1}, \bar{X}_{m+2} , \dots ,\bar{y}_{m}, \bar{y}_{m+1}, \dots ) = 0 ~~\forall i=m+1, \dots, 2m+1 \end{array}\right\} \\
\operatorname{codim } m~~~~ &\rightarrow~~~~ \operatorname{codim } ~(3m+2)
\end{align*} 
Here $F_i$ are polynomial in their variables that give $m+1$ independent algebraic constraints, which may be identified by substituting series expansions for $x(z), y(z)$ and noting that $\bar{X}_{2m+2}$ is the first coefficient in which any $y_i$ appears.
\end{example}
\begin{example}\label{exampleeq3sings}
The equation \eqref{eq20} has a blow-down singularity in ($\ubar{x}_i, \ubar{y}_j$ coordinates) corresponding to $x(z)$ developing a zero of order one. This is given by
\begin{align*}
\left\{ (z_0 - 1) \lambda \ubar{x}_0 + \alpha \ubar{x}_1 =  0\right\} \quad &\rightarrow \quad \left\{ x_0 = 0, \quad (z_0 - 1) \lambda y_0 + \alpha y_1 = 0  \right\} \\
\operatorname{codim } 1~~~~ &\rightarrow~~~~ \operatorname{codim } 2
\end{align*} 
and more generally the development of a zero of order $m$ corresponds to the following blow-down singularity, which for conciseness we write in terms of derivatives of the sections, as opposed to explicitly in terms of coefficients:
\begin{align*}
\left\{ \begin{array}{c}  \frac{d^i}{dz^i} \left(\lambda z \ubar{x}(z) + \alpha \ubar{x}'(z) \right) |_{z=z_0}= 0\\  \forall i = 0, \dots, m-1 \end{array} \right\} \quad &\rightarrow \quad  \left\{\begin{array}{c} x_i = 0 \quad \forall i=0,\dots, m-1\\ \frac{d^i}{dz^i} |_{z=z_0} \left(\lambda z y(z) + \alpha y'(z) \right) = 0\\  \forall i = 0, \dots, m-1 \end{array}\right\} \\
\operatorname{codim } m~~~~ &\rightarrow~~~~ \operatorname{codim } ~2m
\end{align*} 
\end{example}

\subsection{Singularity confinement in equations (1.1-1.3)}

We now formulate a geometric description of the confinement type behaviour we observed in our three examples. Again, the analogy with the discrete case is that if, when iterating the system, we arrive at a blow-down type singularity we only need to iterate a finite number of times further to recover the generic level of information loss, both in terms of orders of jet spaces between which the system induces maps, and the behaviour of the singularity under these in terms of codimension. 

\begin{definition} \label{defconfinement}
Consider a three-point delay differential equation of the form \eqref{threepoint} with iteration mappings $\varphi_r$, which has a blow-down type singularity $B_m$ of codimension $m$. We say the singularity $B_m$ is confined if there exists some $k >0$ such that iterating the system $k$ times induces a map from $B_m \subset J^{r+kl}$ whose image is of codimension $\leq m$ in $J^{r}$.
\end{definition}

We note that this definition captures both the recovery from the increase in codimension of $B_m$ as well as the amount of initial data required to iterate $k$ times generically. Take $B_m$ as a subset of the same order jet space $J^{r+kl}$ as for the generic behaviour $\varphi^{(k)}_r : J^{r+kl} \backslash I_k \rightarrow J^{r}$. We consider accessible blow-down singularities: those that may arise when iterating the system from regular nonzero initial data. For the three equations we consider, we first describe the set of all such singularities and then use our results concerning infinite families of singularity patterns to deduce that they are all confined in the above sense. 
\subsubsection{Equation (1.1)}
\begin{lemma} \label{lemmaeq1}
The only accessible blow-down type singularities of equation \eqref{ddP1} are 
\begin{equation*}
B_m = \left\{ x_i = 0 ~~\forall i =0,\dots, m-1 \right\}.
\end{equation*}
\end{lemma}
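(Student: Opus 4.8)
The plan is to prove the two inclusions separately: that each $B_m$ is an accessible blow-down type singularity, and that no accessible blow-down singularity can be anything else. The first direction is essentially Example~\ref{exampleeq1sings}, where the image of $B_m$ under $\varphi_r$ is computed to have codimension $m+2>m$, so each $B_m$ is a blow-down singularity; its accessibility (that a zero of order $m$ of $u$ does arise on iterating regular nonzero data) is exactly the scenario underlying Theorem~\ref{theoremeq1}. The substance of the lemma is therefore the \emph{only} part, for which I would show that, on the locus reachable from regular nonzero data, the codimension can strictly increase under $\varphi_r$ only where $u=x$ develops a zero.

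The key structural observation is that, in the $(x,y)$ chart, the induced map has the form $\bar y_i = x_i$ together with $\bar x_j = y_j + h_j(x_0,\dots,x_{j+1})$, where each $h_j$ is regular on $\{x_0 \neq 0\}$ and $y_j = \ubar{u}$ enters $\bar x_j$ linearly with coefficient one. This is because the forward iteration \eqref{ddP1fwd} adds $\ubar{u}$ directly, while the remaining term $a - b u'/u$ depends only on $u = x$. On $\{x_0 \neq 0\}$ I would exploit this to change coordinates on the source $J^{r+1}$, replacing $y_0,\dots,y_r$ by $\bar x_0,\dots,\bar x_r$ (an invertible, triangular substitution where $x_0 \neq 0$, since $y_j = \bar x_j - h_j$) and retaining $x_0,\dots,x_{r+1}$ together with $y_{r+1}$. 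In these coordinates $\varphi_r$ is simply the coordinate projection forgetting $x_{r+1}$ and $y_{r+1}$, and a projection can never increase the codimension of the image of a subvariety. Hence no blow-down singularity is supported in $\{x_0 \neq 0\}$ in this chart, and similar computations in the remaining charts — together with the explicit calculation \eqref{poles}, which shows the map preserves codimension where $u$ has a pole — confine all codimension increase to the loci where $u$ vanishes.

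It remains to identify, among the loci $\{x_0 = 0\}$ reachable from regular data, exactly the family $B_m$. Here accessibility is decisive: when $u$ first develops a zero along an orbit issuing from regular nonzero data, the companion value $\ubar{u} = y$ is still regular and nonzero, so $y_0 \neq 0$ and the coefficients of $y$ are generic. The only constraints cutting out the accessible singularity are then the vanishings $x_0 = \dots = x_{m-1} = 0$ of the leading coefficients of $u$, that is, the locus $B_m$ (with $x_m \neq 0$ at its generic point, marking a zero of order exactly $m$). Candidate loci that would also constrain $y$, such as $\{x_0 = y_0 = 0\}$, are excluded precisely because they are not reachable: $u$ and $\ubar{u}$ cannot simultaneously vanish in an orbit starting from regular nonzero data. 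Combining this with the second paragraph yields the stated classification.

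The main obstacle I anticipate is twofold. First, for $m \geq 2$ the locus $B_m$ lies inside the indeterminacy set $I_1$ (since $x_0 = x_1 = 0$ there), so the image under $\varphi_r$ must be understood as the well-defined restriction obtained by passing to the $(\bar X, \bar y)$ chart, where the numerator $x(y+a)-bx'$ and denominator $x$ of $\bar x$ share a common factor of order $\zeta^{m-1}$; verifying that this restricted map genuinely raises the codimension to $m+2$, and that it is compatible with the projection description on the ambient $\{x_0 \neq 0\}$, requires care. Second, one must carry out the codimension bookkeeping in all four charts and for the mixed zero/pole configurations, not only the two cases quoted above, to be certain that no accessible codimension increase has been overlooked.
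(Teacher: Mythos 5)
Your proposal is correct and follows essentially the same route as the paper: both exploit the triangular form $\bar y_n = x_n$, $\bar x_n = y_n + (\text{function of } x_0,\dots,x_{n+1})$ to show that no codimension increase occurs where $x_0 \neq 0$ (the paper via rank counts on Jacobians of local parametrisations, you via the equivalent observation that the map becomes a coordinate projection after a triangular change of variables), and both use accessibility to exclude poles not preceded by zeros as well as simultaneous vanishing of $x$ and $y$. The only substantive difference is that the paper also explicitly computes the map restricted to $\{x_0=\dots=x_{m-1}=0,\ x_m\neq 0\}$ and reruns the Jacobian argument inductively to rule out further blow-down singularities hiding inside each $B_m$ --- the step you assert and flag as `requiring care' --- though your projection argument does extend to that locus, since $y_{n-2}$ still enters $\bar X_n$ linearly with constant coefficient there.
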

\begin{proof}
We will first show that the only blow-down singularities visible in the $x_i, y_j$ chart are contained in $\left\{x_0 = 0\right\}$. Suppose $B \subset J^{r+1}$ is of codimension $m$, so dimension $d=2(r+1) - m$, defined locally by $F_1 = \dots = F_l = 0$, where $F_i$ are polynomial in $x_0, \dots, x_m, y_0, \dots, y_m$, and that $x_0 \neq 0$ on $B$. Then near $p \in B$ (at which $B$ is nonsingular) given in coordinates by $p : (x_i, y_j) = (x_i^*, y_j^*)$, we have a parametrisation of $B$ by $d$ free parameters. That is, there exist $i_1, \dots, i_p$, $j_1,\dots j_{d-p} \subset \left\{0,\dots, r+1\right\}$ such that we have a parametrisation
\begin{subequations}
\begin{align}
\left(\begin{array}{c} s_1 \\ \vdots  \\ s_p \\ t_1 \\ \vdots  \\ t_{d-p} \end{array}\right) \mapsto \begin{array}{c} x_{i_1} = x_{i_1}^{*} + s_1 \\ \vdots \\ x_{i_p} = x_{i_p}^{*} + s_p  \\ y_{j_1} = y_{j_1}^{*} + t_1 \\ \vdots \\y_{j_{d-p}} = y_{j_{d-p}}^{*} + t_{d-p} \end{array}
\end{align}
\end{subequations}
with the rest of the variables $x_i, y_j$ given by analytic functions of $s_1, \dots, s_p$, $t_1, \dots , t_{d-p}$:
\begin{equation}
x_i = x_i^* + F_i(s_1, \dots, s_p, t_1, \dots, t_{d-p}), \quad \quad y_j = y_j^* + G_j(s_1, \dots, s_p, t_1, \dots, t_{d-p}),  
\end{equation}
for $i \not\in \left\{ i_1, \dots, i_p \right\}$, $j \not\in \left\{ j_1, \dots, j_{d-p} \right\}$, with $F_i, G_j$ anaytic and zero when all $s_i, t_j$ are zero, and the Jacobian of this parametrisation at $p$ is of rank $d$. We now show, using this parametrisation, that the image of $B$ in $J^r$ under $\varphi_r$ is of dimension $\geq 2r-m$ as long as $x_0 \neq 0$ on $B$. In coordinates, the mapping is of the form
\begin{equation}
\bar{y}_n = x_n, \quad \quad \bar{x}_n = y_n - \frac{P_n (x_0 ,\dots, x_{n+1})}{x_0^{n+1}}.
\end{equation}
Here $P_n$ is a homogeneous polynomial of degree $n+1$, which follows from the repeated application of the quotient rule in computing expressions for derivatives of $\bar{x} = y + \frac{ax-bx'}{x}$. We obtain a local parametrisation of the image of $B$:
\begin{equation}
 \begin{array}{l}\bar{y}_{i_1} = x_{i_1}^* + s_1 \\ \quad \quad \vdots \\ \bar{y}_{i_p} = x_{i_p}^* + s_p \end{array} \quad   \begin{array}{l} \bar{x}_{j_1} = y_{j_1}^* + t_1 + H_1 \\ \quad \quad  \vdots \\\bar{x}_{j_{d-p}} = y_{j_{d-p}}^* + t_{d-p} + H_{d-p} \end{array}
\end{equation}
where $H_1, \dots H_{d-p}$ are analytic in $s_1, \dots, s_p$ (as $x_0 \neq 0$ on $B$), with the rest of the coordinates $\bar{y}_i, \bar{x}_j$ being analytic functions of the parameters. The Jacobian of this parametrisation can be seen to have rank at least $d-2$, with linearly independent columns corresponding to partial derivatives with respect to $s_1, \dots s_p, t_1, \dots, t_{d-p-1}$ ($t_{d-p}$ will not contribute to the rank if $d-p=r+1$, i.e. if $y_{r+1}$ is one of the free variables in the parametrisation of $B$). The possibility that the image is of codimension less than $m$ has already been illustrated at the start of subsection 3.2, where constraints on $y_j$ may not induce constraints on the image. \\

Similarly, if we consider a subvariety of codimension $m$ in the chart $(X,y)$ away from $\{X_0=0\}$, we see that its image under $\varphi_r$ must be again of codimension $\leq m$. This is done in exactly the same way as above, noting that the mapping in charts is of the form 
\begin{equation}
\bar{Y}_n = X_n, \quad \quad \bar{x}_n = y_n - \frac{ P_n (X_0 ,\dots, X_{n+1})}{X_0^{n+1}},
\end{equation}
where again $P_n$ is a homogeneous polynomial of degree $n+1$. Regarding the part of the jet space with $X_0 = 0$, we remark that $X_0 = 0$ with $y_0 \neq0$ is not an accessible singularity, as for a pole to develop while iterating, it must follow a zero. Further, the only parts of $\left\{X_0 = 0, y_0 = 0\right\}$ accessible from regular and nonzero initial data are those coming from one of the blow-down singularities $B_m$. Similar calculations in the charts $(x,Y)$ and $(X,Y)$ show that it suffices to consider blow-down singularities visible in the $(x,y)$ chart where at least $x_0 = 0$. If we take $x(z) = x_m \zeta^m + x_{m+1} \zeta^{m+1} + \dots$ for $m > 0$ and $y = y_0 + y_1 \zeta + \dots$, then direct calculation shows that we have
\begin{equation}
\bar{X}_0 = 0, \quad \bar{X}_1 = -\frac{1}{bm}, \quad \bar{X}_2 = \frac{b x_{m+1} - a x_m}{b^2 m^2 x_m} - \frac{y_0}{b^2 m^2}, \quad\dots
\end{equation}
and more generally that 
\begin{equation}
\begin{aligned}
\bar{X}_n &= \frac{P_n(x_m, \dots, x_{m+n}, y_0, \dots, y_{n-1})}{b^n m^n x_m^{n-1}} - \frac{y_{n-2}}{b^2 m^2}, \\
\bar{y}_n &= 0 \text{ for } n <m, \quad \bar{y}_n = x_m \text{ for } n \geq m,
\end{aligned}
\end{equation}
where $P_n$ is polynomial in its arguments. By again considering parametrisations and their Jacobians, it is straightforward to show that we cannot have blow-down singularities away from $x_m = 0$. Applying this argument inductively completes the proof that the only accessible blow-down singularities are as claimed.
\end{proof}

We now show how the singularity patterns pointed out in \autoref{infinitefamilies} correspond to confinement of blow-down singularities for equation \eqref{ddP1}. 
\begin{example} \label{eq1exampleB1}
The singularity $B_1$, which corresponds to the beginning of the singularity pattern
\begin{equation*}
(\operatorname{rg}, 0^1, \infty^1, \infty^1, 0^1, \operatorname{rg} ),
\end{equation*}
is confined after five iterations. We calculate as we did in \autoref{section2} but keep track of orders of jets and codimensions to find that composing the iteration on sections gives maps as follows:
\begin{align*}
~&B_1  \subset J^{r+5} &&\quad &&&\operatorname{codim}(B_1) = 1 &&
&& &&&&&(x^{(0)},y^{(0)}) = (0^1, \operatorname{rg})  \\
\varphi^{(1)} : ~& B_1 \rightarrow J^{r+5} &&\quad &&&\operatorname{codim}(\varphi^{(1)}(B_1)) = 3 &&
&& &&&&&(x^{(1)},y^{(1)}) = (\infty^1, 0^1)  \\
\varphi^{(2)} : ~& B_1 \rightarrow J^{r+5} &&\quad &&&\operatorname{codim}(\varphi^{(2)}(B_1)) = 5  &&
&& &&&&&(x^{(2)},y^{(2)})  = (\infty^1, \infty^1)\\
\varphi^{(3)} : ~& B_1 \rightarrow J^{r+3} &&\quad &&&\operatorname{codim}(\varphi^{(3)}(B_1)) = 3  &&
&& &&&&&(x^{(3)},y^{(3)})  = (0^1, \infty^1)\\
\varphi^{(4)} : ~& B_1 \rightarrow J^{r+1} &&\quad &&&\operatorname{codim}(\varphi^{(4)}(B_1)) = 1  &&
&& &&&&&(x^{(4)},y^{(4)}) = ( \operatorname{rg}, 0^1)\\
\varphi^{(5)} : ~& B_1 \rightarrow J^{r} &&\quad &&&\operatorname{codim}(\varphi^{(5)}(B_1)) = 0  &&
&& &&&&&(x^{(5)},y^{(5)})  = ( \operatorname{rg}, \operatorname{rg})
\end{align*}
For each iteration, we have indicated the order of jet space to which we have well-defined mappings from $B_1$, as well as codimensions of the images of $B_1$ and the corresponding parts of the singularity pattern. We note that the exceptional behaviour we observed in the singularity pattern, namely that when computing $x^{(3)}$, three terms vanished as it developed a zero rather than a pole, is reflected in the codimension falling from $5$ to $3$.
\end{example}

More generally, if we take the blow-down singularities $B_m$ as in \autoref{lemmaeq1} as subsets of $J^{2m+3+r}$ with the rest of the coefficients generic, from \autoref{theoremeq1} we see that iterating the system \eqref{ddP1} induces a map $\varphi^{(2m+3)} : B_m \rightarrow  J^{r}$, where the image of $B_m$ is a jet visible in the $(x,y)$ chart. To see that this image is of codimension zero, we must make some observations of how the initial data from the section $(x^{(0)},y^{(0)})$ enters into the subsequent iterates, and in particular how it is recovered in $(x^{(2m+3)}, y^{(2m+3)})$. This will require detailed but straightforward analysis of the mapping on jets in three cases, corresponding to different points in the singularity pattern. Firstly, when the first pole develops and how the coefficients from $(x^{(0)}, y^{(0)})$ enter into $X^{(1)}, X^{(2)}$, secondly, how the initial data is propagated through the sequence of simple poles $X^{(1)}, \dots, X^{(2m)}$, then how it reenters $x^{(2m+2)}, x^{(2m+3)}$ after the zero develops at $x^{(2m+1)}$. The key technique for our analysis here is essentially identifying and counting free variables, which we illustrate in detail in this example. \\

We first consider the map from $(x^{(0)}, y^{(0)})$ to $(X^{(1)},y^{(1)})$ corresponding to the development of the first simple pole in the sequence. Here we omit the superscripts for conciseness, working with the mapping in the charts $(x,y)$ and $(\bar{X},\bar{y})$. Beginning with initial data corresponding to $B_m$, namely sections in the $(x,y)$ chart with $x_0 = x_1 = \dots x_{m-1} = 0$, with the rest of the coefficients $x_i, y_j$ generic, by direct calculation we have 
\begin{equation*}
\begin{aligned}
\bar{X}_0 &=0, ~~\bar{X}_1= - \frac{1}{m b},  \quad &&\bar{X}_{n} =  -\frac{y_{n-2}}{m^2 b^2} + \frac{P_n(x_m,\dots, x_{m+n-1}, y_0 ,\dots, y_{n-3})}{x_m^{n-1}}, \quad \text{ for } n\geq 2\\
\bar{y}_0 &= \dots = \bar{y}_{m-1} = 0, \quad  &&\bar{y}_n = x_n, \quad \quad \text{ for } n\geq m,
\end{aligned}
\end{equation*}
where $P_n$ is polynomial in its arguments. From this, we see that the coefficients $\bar{X}_{i\geq 2}, \bar{y}_{j\geq m}$ are algebraically independent functions of the initial data, which follows from the way in which the free variable $y_{n-2}$ ($n\geq 2$) appears linearly in $\bar{X}_n$ but not at all in $\bar{X}_{n-1}$ and so on. In particular we have the image of $B_m$ under a single iteration being of codimension $m+2$, as noted in Example \autoref{exampleeq1sings}. Similarly, we see that the next iterate is obtained from $\bar{X}_i, \bar{y}_j$ above as
\begin{equation*}
\begin{aligned}
\bar{\bar{X}}_0 &=0, ~~\bar{\bar{X}}_1= \frac{1}{b},  \quad \bar{\bar{X}}_{j} =  P_j(\bar{X}_0, \dots, \bar{X}_j), \quad &&\text{ for } 2 \leq j \leq m+1, \\
&\bar{\bar{X}}_{n} =  -\frac{\bar{y}_{n-2}}{b^2} + Q_n(\bar{X}_0, \dots, \bar{X}_n, \bar{y}_m \dots \bar{y}_{n-3}), \quad &&\text{ for } n \geq m+2, \\
&\bar{\bar{Y}}_0 = 0, \quad \bar{\bar{Y}}_1 = - \frac{1}{m b }, \quad \bar{\bar{Y}}_n = \bar{X}_n , \quad &&\text{ for } n \geq 2.
\end{aligned}
\end{equation*}
Here $P_j$ is again polynomial, linear in $\bar{X}_j$, and $Q_n$ is polynomial in its arguments. From this, we see that the image of $B_m$ is of codimension $m+4$, with $\bar{\bar{X}}_i, \bar{\bar{Y}}_{j}$ having the following dependence on the initial data $x_i,y_j$: 
\begin{equation*}
\begin{aligned}
\bar{\bar{X}}_0 &=0, ~~\bar{\bar{X}}_1= \frac{1}{b}, &&\bar{\bar{X}}_{n} =  F_n( y_0, \dots, y_{n-2}, x_m, \dots, x_{m+n-1}) &&&\text{ for } n \geq m+2, \\
\bar{\bar{Y}}_0 &=0, ~~\bar{\bar{Y}}_1= -\frac{1}{m b},  &&\bar{\bar{Y}}_{n} =  G_n( y_0, \dots, y_{n-2}, x_m, \dots, x_{m+n-1}) &&&\text{ for } n \geq m+2,
\end{aligned}
\end{equation*}
where, importantly, $F_n$ is linear in $y_{n-2}$ with constant coefficient, and also linear in $x_{m+n-1}$ with coefficient being a constant multiple of $1/x_m$. \\

We now consider the iterates $X^{(3)}, \dots, X^{(2m)}$, which correspond to simple poles, and show that we have the same kind of dependence of coefficients on the initial data. Building on our calculation \eqref{poles} in the charts $(X,Y), (\bar{X},\bar{Y})$, we see that sections with $(X_0, Y_0) = (0,0)$ have images under the iteration mapping given by
\begin{equation} \label{simplepolerecs}
\begin{aligned}
\bar{X}_0 &=0, &&\bar{X}_1= \frac{Y_1}{1+b Y_1},  \quad \bar{X}_{n} =  \frac{Y_n}{(1+bY_1)^2} + \frac{P_n(X_1,\dots, X_{n}, Y_1, \dots Y_{n-1})}{X_1^{n-1}(1+b Y_1)^n},\\
\bar{Y}_0 &= 0, &&\bar{Y}_1 = X_1, \quad \quad \quad ~~\bar{Y}_n = X_n, \quad \quad \text{ for } n\geq 2,
 \end{aligned}
\end{equation}
where $P_n$ is polynomial in its arguments, and we note that these expansions are valid for determining all iterates $(X^{(2)}, Y^{(2)}), \dots, (X^{(2m)}, Y^{(2m)})$, as we have $X_1^{(k)} \neq 0, 1+b Y_1^{(k)} \neq 0$, for $k=0,\dots, 2m-1$, which we know from our explicit expressions of the residues of the simple poles in the singularity pattern, given in \autoref{theoremeq1} . Iterating through this sequence of simple poles, we have well-defined maps $J^{2m+3+r} \backslash \left\{ X_1 (1+b Y_1)= 0 \right\} \rightarrow J^{2m+3+r}$, and a simple calculation using the Jacobian as in the proof of \autoref{lemmaeq1} shows that the image of $B_m$ cannot change codimension in $J^{2m+3+r}$ under this sequence of maps, so we have the images of $B_m$ under $\varphi^{(2)}, \dots, \varphi^{(2m)}$ are all of codimension $m+4$. \\

Further, from \eqref{simplepolerecs} and our observations of $(\bar{\bar{X}}, \bar{\bar{Y}})$ we see that for $k = 2, \dots, 2m$, the coefficients $X^{(k)}_n, Y^{(k)}_n$ have the same kind of dependence on the initial data, and in particular the last iterate before the zero develops is of the form
\begin{equation*}
\begin{aligned}
X^{(2m)}_0 &=0, ~~X^{(2m)}_1= \frac{1}{mb}, &&X^{(2m)}_{n} =  F^{(2m)}_n( y_0, \dots, y_{n-2}, x_m, \dots, x_{m+n-1}) &&&\text{ for } n \geq m+2, \\
Y^{(2m)}_0 &=0, ~~Y^{(2m)}_1= -\frac{1}{b},  &&Y^{(2m)}_{n} =  G^{(2m)}_n( y_0, \dots, y_{n-2}, x_m, \dots, x_{m+n-1}) &&&\text{ for } n \geq m+2,
\end{aligned}
\end{equation*}
where again $F_n$ is linear in $y_{n-2}$ with constant coefficient, and also linear in $x_{m+n-1}$ with coefficient being a constant multiple of $1/x_m$.

We now consider the final step, when the map $(X^{(2m)}, Y^{(2m)}) \mapsto (x^{(2m+1)}, Y^{(2m+1)})$ shows a drop in codimension of the image of $B_m$, with the development of a zero of order $m$. Omiting the superscripts for conciseness and writing $(X^{(2m)}, Y^{(2m)}) = (X_0 + X_1 \zeta+\dots, Y_0+ Y_1 \zeta + \dots)$, we know that the the coefficients for the image of the $B_m$ under the iterations up to this point in the singularity pattern must satisfy at least 
\begin{equation}\label{endpattern}
Y_0 = 0, \quad Y_1 = - b^{-1}, \quad X_0 = 0, \quad X_1 = (mb)^{-1}.
\end{equation}
Similarly writing $(x^{(2m+1)}, Y^{(2m+1)}) = (\bar{x}_0 +\bar{x}_1 \zeta + \dots, \bar{Y}_0 + \bar{Y}_1 \zeta+ \dots)$, we see the mapping on coefficients from jets satisfying \eqref{endpattern} gives
\begin{equation} \label{laststep}
\begin{aligned}
&\bar{x}_0 = 0,\quad \bar{x}_1 = a + b^2 m X_2 - b^2 Y_2, \quad \bar{x}_2 =-b^2 \left(b m^2 X_2^2+b Y_2^2-2 m X_3+Y_3\right), \\
&\bar{x}_n = b^2(  n m X_{n+1} - Y_{n+1} ) + P_n (X_2,\dots, X_{n}, Y_2, \dots, Y_{n}), \quad \text{ for } n \geq 1,  \\
&\bar{Y}_0 = 0,  \quad \bar{Y}_1 = \frac{1}{mb}, \quad \bar{Y}_j = X_j, \quad \text{ for } j\geq 2,
\end{aligned}
\end{equation}
where we have again used $P_n$ to denote a polynomial in its arguments. We know from \autoref{theoremeq1} that if $(X^{(2m)}, Y^{(2m)})$ are obtained by iterating from $B_m$, then the coefficients $X_i, Y_j$ must satisfy the algebraic conditions for $\bar{x}_0, \dots, \bar{x}_{m-1}$ given by \eqref{laststep} to all vanish, and we know exactly what relations must exist between the coefficients $(X^{(2m)}_i, Y^{(2m)}_j)$, which have evolved through the singularity pattern from those defining $B_m$. Further, from the dependence of $X_i^{(2m)}, Y_j^{(2m)}$ on the initial data, and the way in which $X^{(2m)}_i, Y^{(2m)}_j$ enter into $x^{(2m+1)}_i , Y^{(2m+1)}_j$ according to \eqref{laststep}, we see that the image of $B_m$ under $\varphi^{(2m+1)}$ is of codimension $m+3$ in the jet space corresponding to $(x^{(2m+1)} , Y^{(2m+1)})$. Finally, another calculation on the exact same lines shows that after one more step, we have the image of $B_m$ under $\varphi^{(2m+1)}$ being of codimension zero.

\subsubsection{Equation (1.2)}

The analysis in this case proceeds in exactly the same way as the previous one, so we omit details for conciseness. In particular, the following may be proved using the same techniques and approach as for \autoref{lemmaeq1}:
\begin{lemma} \label{lemmaeq2}
The only accessible blow-down type singularities of equation \eqref{ddP1} are 
\begin{equation*}
B_m = \left\{ x_i = 0 ~~\forall i =0,\dots, m-1 \right\}.
\end{equation*}
\end{lemma}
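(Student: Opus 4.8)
The plan is to take the statement at face value for equation \eqref{ddP1}, where it coincides with the assertion already recorded as \autoref{lemmaeq1}; accordingly I would run exactly the argument used there, which I now outline. The forward iteration of \eqref{ddP1} is $\bar{u} = \ubar{u} + a - b\,u'/u$, so on jets in the $(x,y)$ chart the induced map reads $\bar{y}_n = x_n$ and $\bar{x}_n = y_n - P_n(x_0,\dots,x_{n+1})/x_0^{n+1}$, with $P_n$ homogeneous of degree $n+1$ arising from repeated application of the quotient rule to $x'/x$. The strategy has three stages: first, show chart by chart that no blow-down singularity can occur away from the locus where $u$ develops a zero; second, reduce the remaining region to $\{x_0 = 0\}$ in the $(x,y)$ chart using the fact that a pole can only follow a zero; and third, analyse the zero locus directly and close the classification by induction on the order of the zero.

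For the first stage I would fix a codimension $m$ subvariety $B \subset J^{r+1}$ on which $x_0 \neq 0$, and near a smooth point parametrise $B$ analytically by $d = \dim B$ free parameters chosen among the $x_i$ and $y_j$. Pushing this parametrisation forward through the coordinate form above, the free $x$-parameters reappear as free $\bar{y}$-coordinates while the free $y$-parameters enter the corresponding $\bar{x}$-coordinates linearly, the condition $x_0 \neq 0$ ensuring analyticity of the remaining terms. Computing the Jacobian of the induced parametrisation of the image, one checks that its rank is at least $\dim B - 2$, the potential loss coming only from the top-order coefficients sent to indices outside $J^r$; since $\dim J^{r+1} - \dim J^r = 2$, the image then has codimension at most $m$, so $B$ is not a blow-down singularity. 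The identical computation in the $(X,y)$ chart away from $\{X_0 = 0\}$, and the analogous ones in $(x,Y)$ and $(X,Y)$, dispose of every stratum except where the first component of the section vanishes.

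For the remaining two stages I would invoke the observation that, from regular nonzero initial data, a pole can only develop after a zero, so the only accessible parts of $\{X_0 = 0\}$ (equivalently $\{x_0 = 0,\, y_0 = 0\}$) arise from one of the $B_m$; it therefore suffices to examine $\{x_0 = 0\}$ in the $(x,y)$ chart. Parametrising a zero of exact order $m$ as $u = x_m\zeta^m + x_{m+1}\zeta^{m+1} + \dots$ with $x_m \neq 0$, direct substitution gives $\bar{X}_0 = 0$, $\bar{X}_1 = -1/(bm)$, $\bar{X}_n = P_n(x_m,\dots,x_{m+n},y_0,\dots,y_{n-1})/(b^n m^n x_m^{n-1}) - y_{n-2}/(b^2 m^2)$ for $n\geq 2$, together with $\bar{y}_n = 0$ for $n < m$ and $\bar{y}_n = x_n$ for $n \geq m$. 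Running the same Jacobian argument within this exact-order-$m$ stratum shows that no blow-down singularity occurs away from $\{x_m = 0\}$, so the only blow-down singularity visible in the stratum is $B_m$ itself, the jump in codimension being forced precisely by the vanishing $\bar{y}_i = 0$ $(i<m)$ and the fixed value $\bar{X}_1 = -1/(bm)$, exactly as in \autoref{exampleeq1sings}. Passing to the deeper locus $\{x_m = 0\}$ produces $B_{m+1}$, and induction on $m$ completes the proof.

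The main obstacle, and the step I would treat most carefully, is the uniform Jacobian rank bound underlying both the chart-by-chart argument and the stratum-by-stratum induction. One must keep precise track of which jet coefficients are free and which are dependent across the four charts, verify that constraints imposed on the $y_j$ can only lower and never raise the codimension of the image, and confirm that the rank loss is genuinely capped at two by the coefficients that escape $J^r$. Everything else is routine bookkeeping, but it is this rank estimate that rules out spurious increases in codimension and hence pins down the $B_m$ as the complete list of accessible blow-down singularities.
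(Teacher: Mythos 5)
Your proposal founders on a point you noticed but then reasoned past: the citation of \eqref{ddP1} in the lemma statement is a typo. The lemma sits in the subsubsection devoted to Equation (1.2), and is immediately applied to equation \eqref{dHKdV} in the following paragraph, where the paper takes the $B_m$ ``from \autoref{lemmaeq2}'' as subsets of $J^{2m+2+r}$ and combines them with \autoref{theoremeq2} to conclude that all accessible blow-down singularities of \eqref{dHKdV} are confined. The paper's own proof of this lemma is precisely the remark that the techniques of \autoref{lemmaeq1} transfer to \eqref{dHKdV}; the mathematical content of the statement is therefore the adaptation of that argument to the second equation. By electing to ``take the statement at face value for equation \eqref{ddP1}'' and re-running the \autoref{lemmaeq1} argument verbatim, you have (accurately) duplicated a result already proved in the paper, while the statement actually needed later --- the classification of accessible blow-down singularities of \eqref{dHKdV} --- remains unproved.

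Concretely, the missing adaptation involves genuinely different formulas, even though the strategy (chart-by-chart Jacobian rank bounds, the ``pole only after a zero'' reduction, induction on the order of the zero) is the same. For \eqref{dHKdV} the forward map is $\bar v = \ubar v + p/v + q v'/v^2$, so in the $(x,y)$ chart the induced map reads $\bar y_n = x_n$, $\bar x_n = y_n + P_n(x_0,\dots,x_{n+1})/x_0^{n+2}$, with denominators one power higher than in your write-up; a zero of order $m$ now produces a pole of order $m+1$, so the local computation on the stratum $\{x_0 = \dots = x_{m-1} = 0,\ x_m \neq 0\}$ yields $\bar X_0 = \dots = \bar X_m = 0$ with leading coefficient proportional to $x_m/(mq)$, rather than the fixed residue $\bar X_1 = -1/(bm)$; and the image of $B_m$ has codimension $3m+2$, cut out by the additional polynomial constraints $F_i$ recorded in \autoref{exampleeq2sings}, not codimension $m+2$. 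Correspondingly, the accessible parts of $\{X_0 = 0\}$ are those arising in the alternating pattern $(\operatorname{rg}, 0^m, \infty^{m+1}, 0^m, \dots)$ of \autoref{theoremeq2}, not a string of simple poles, and although the reduction ``a pole can only follow a zero'' does carry over (the right-hand side $\ubar v + p/v + qv'/v^2$ is regular on regular nonzero data), it must be asserted and used for this equation. The Jacobian and free-variable bookkeeping you rightly identify as the crux is unchanged in spirit --- analyticity away from $x_0 = 0$ is unaffected by the higher power $x_0^{n+2}$ --- but none of these \eqref{dHKdV}-specific computations appear in your proposal, so as written it proves the wrong lemma.
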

We may also use the same techniques to examine the behaviour of blow-down singularities in terms of codimension, beginning with that associated with a simple zero:
\begin{example} \label{eq2exampleB1}
The singularity $B_1$ of equation \eqref{dHKdV}, which corresponds to the start of the singularity pattern
\begin{equation*}
(\operatorname{rg}, 0^1 ,\infty^2 , 0^1, \operatorname{rg} ),
\end{equation*}
is confined after four iterations, with the following behaviour under compositions of the iteration maps:
\begin{align*}
~&B_1  \subset J^{r+4} &&\quad &&&\operatorname{codim}(B_1) = 1 &&
&& &&&&&(x^{(0)},y^{(0)}) = (0^1, \operatorname{rg})  \\
\varphi^{(1)} : ~& B_1 \rightarrow J^{r+4} &&\quad &&&\operatorname{codim}(\varphi^{(1)}(B_1)) = 5 &&
&& &&&&&(x^{(1)},y^{(1)}) = (\infty^2, 0^1)  \\
\varphi^{(2)} : ~& B_1 \rightarrow J^{r+4} &&\quad &&&\operatorname{codim}(\varphi^{(2)}(B_1)) = 5  &&
&& &&&&&(x^{(2)},y^{(2)})  = (0^1, \infty^2)\\
\varphi^{(3)} : ~& B_1 \rightarrow J^{r+1} &&\quad &&&\operatorname{codim}(\varphi^{(3)}(B_1)) = 1  &&
&& &&&&&(x^{(3)},y^{(3)})  = (\operatorname{rg}, 0^1)\\
\varphi^{(4)} : ~& B_1 \rightarrow J^{r} &&\quad &&&\operatorname{codim}(\varphi^{(4)}(B_1)) = 0  &&
&& &&&&&(x^{(4)},y^{(4)}) = (\operatorname{rg}, \operatorname{rg})
\end{align*}
We note here again that the drop in codimension occurs when two terms vanish in the expansion for $x^{(3)}$ as it regains regularity as opposed to having a double pole.
\end{example}
Again, considering the blow-down singularities $B_m$ from \autoref{lemmaeq2} as subsets of $J^{2m+2+r}$, \autoref{theoremeq2} and tracing the dependence on initial data of the iterates through the sequence using exactly the same techniques as in the previous example, we see that we have $\varphi^{(2m+2)} : B_m \rightarrow  J^{r}$ under which the image of $B_m$ is of codimension zero, so all accessible blow-down singularities of equation \eqref{dHKdV} are confined.
\subsubsection{Equation (1.3)}

In this case we begin with an example, as the blow-down singularities for equation \eqref{eq20} occur not after $x$ develops a zero at $z_0$, but under the mapping applied to the jets in $\ubar{x},\ubar{y}$ coordinates satisfying the condition for a zero to develop.
\begin{example} \label{eq2exampleB1}
The condition on $(\underline{x}, \underline{y})$ for a simple zero to develop while iterating equation \eqref{eq20}, namely 
\begin{equation*}
B_1 = \left\{ \alpha \ubar{x}_1+\lambda(z_0 - 1) \ubar{x}_0 = 0 \right\},
\end{equation*}
with the rest of the coefficients generic, corresponds to the start of the singularity pattern which we denoted in \autoref{section2} by
\begin{equation*}
\left( \operatorname{rg}, \ubar{\zeta}_0^1,  0^{1} , \infty^{1}, \bar{\bar{\zeta}}_0^1 , \operatorname{rg} \right).
\end{equation*}
We observe a jump in codimension not from $(x^{(0)}, y^{(0)})$ to $(x^{(1)}, y^{(1)})$, but one step earlier, and we observe the following behaviour under compositions of the iteration maps:
\begin{align*}
~&B_1  \subset J^{r+5} &&\quad &&&\operatorname{codim}(B_1) = 1 &&
&& &&&&&(x^{(-1)},y^{(-1)}) = (\ubar{\zeta}_0^1, \operatorname{rg})  \\
\varphi^{(1)} : ~& B_1 \rightarrow J^{r+4} &&\quad &&&\operatorname{codim}(\varphi^{(1)}(B_1)) = 2 &&
&& &&&&&(x^{(0)},y^{(0)}) = (0^1, \ubar{\zeta}_0^1)  \\
\varphi^{(2)} : ~& B_1 \rightarrow J^{r+4} &&\quad &&&\operatorname{codim}(\varphi^{(2)}(B_1)) = 2  &&
&& &&&&&(x^{(1)},y^{(1)})  = (\infty^1,0^1)\\
\varphi^{(3)} : ~& B_1 \rightarrow J^{r+3} &&\quad &&&\operatorname{codim}(\varphi^{(3)}(B_1)) = 2  &&
&& &&&&&(x^{(2)},y^{(2)})  = ( \bar{\bar{\zeta}}_0^1, \infty^1)\\
\varphi^{(4)} : ~& B_1 \rightarrow J^{r+1} &&\quad &&&\operatorname{codim}(\varphi^{(4)}(B_1)) = 1  &&
&& &&&&&(x^{(3)},y^{(3)}) = (\operatorname{rg},  \bar{\bar{\zeta}}_0^1)\\
\varphi^{(5)} : ~& B_1 \rightarrow J^{r} &&\quad &&&\operatorname{codim}(\varphi^{(5)}(B_1)) = 0  &&
&& &&&&&(x^{(4)},y^{(4)}) = (\operatorname{rg},  \operatorname{rg})
\end{align*}
We note here again that a drop in codimension occurs when $x^{(3)}$ regains regularity as opposed to a simple zero.
\end{example}
Again by the same approach, the following may proved by local calculations in charts:
\begin{lemma} \label{lemmaeq3}
The only accessible blow-down type singularities of equation \eqref{eq20} are 
\begin{equation*}
B_m = \left\{ \frac{d^i}{dz^i} \left. \left(\lambda z \ubar{x}(z) + \alpha \ubar{x}'(z) \right) \right|_{z=z_0}= 0, \quad \forall i = 0, \dots, m-1 \right\}
\end{equation*}
\end{lemma}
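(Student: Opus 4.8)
The plan is to adapt the scheme used for \autoref{lemmaeq1} to the forward iteration \eqref{eq20fwd}, which written on sections in the $(x,y)$ chart reads
\[
\bar{x} = y\left(\lambda z + \alpha\frac{x'}{x}\right), \qquad \bar{y} = x,
\]
and to work locally in the four charts $(x,y),(X,y),(x,Y),(X,Y)$ with $X=1/x$, $Y=1/y$. The structural features I would exploit are that $\bar{y}=x$ records the entire $x$-jet faithfully, so only $y$-directions can be lost, and that in $\bar{x}=y\,h$ with $h=\lambda z+\alpha x'/x$ the coefficients enter as $\bar{x}_n=\sum_{j=0}^{n}y_j h_{n-j}$, where $h_0 = \lambda z_0 + \alpha x_1/x_0 = (\lambda z\,x + \alpha x')|_{z_0}/x_0$. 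The quantity $h_0$ is exactly the obstruction: it vanishes precisely when the next iterate $\bar{x}$ acquires a zero, and it is the vanishing of the first-order operator $\lambda z\,(\cdot)+\alpha(\cdot)'$ that governs all accessible blow-down singularities.

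First I would run the Jacobian rank estimate away from the locus $\{h_0=0\}$. When $x_0\neq 0$ and $h_0\neq 0$, the coefficient $y_n$ occurs in $\bar{x}_n$ with the nonzero factor $h_0$ and in none of $\bar{x}_0,\dots,\bar{x}_{n-1}$, so together with $\bar{y}_n=x_n$ the parametrisation of the image of any codimension $m$ subvariety has Jacobian of rank at least $\dim B - 2$, exactly as in \autoref{lemmaeq1}; hence its image has codimension at most $m$ and it is not a blow-down singularity. The analogous computations in the charts containing $X=1/x$ handle the parts where $x$ has a pole, and here I would invoke the accessibility argument used in \autoref{lemmaeq1}, namely that a pole can only develop through the term $\alpha x'/x$ after a zero, to discard pole loci not reached from an earlier zero. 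This reduces the classification to loci on which $\bar{x}$ develops a zero, i.e. to $\{h_0=0\}$.

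The decisive point, and the place where \eqref{eq20} differs from \eqref{ddP1}, is that $h_0=0$ is a condition on $x$ \emph{alone}, namely $\lambda z_0 x_0+\alpha x_1=0$; relabelling the predecessor jet as $\ubar{x}$ as in \autoref{exampleeq3sings}, this is precisely $(\lambda z\,\ubar{x}+\alpha\ubar{x}')|_{z_0}=0$, the defining equation of $B_1$. Because this condition does not involve the companion coordinate, it is inherited by the image through $\bar{y}=x$ and cannot be evaded by adjusting earlier data, so the only accessible way to reach a developing zero is from a jet lying on $B_1$ — which is why the singularity is naturally parametrised one step before the zero appears, rather than at $\{x_0=0\}$ with generic companion data. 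I would then close the argument by induction on the order of vanishing: on the locus where $\lambda z\,\ubar{x}+\alpha\ubar{x}'$ vanishes to order exactly $j<m$ the induced map produces $B_j$ rather than any new singularity, while demanding a zero of order $m$ in $x$ forces the full hierarchy $\frac{d^i}{dz^i}(\lambda z\,\ubar{x}+\alpha\ubar{x}')|_{z_0}=0$ for $i=0,\dots,m-1$, exactly as recorded in \autoref{exampleeq3sings}. I expect the main obstacle to be precisely this accessibility bookkeeping across charts: one must verify that the non-autonomous factor $\lambda z$ and the shift in the independent variable do not create codimension increases off $\{h_0=0\}$, and that every accessible zero genuinely traces back to some $B_m$, so that no blow-down singularity with generic companion data survives as accessible.
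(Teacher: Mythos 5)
Your proposal is correct and takes essentially the same route as the paper, which only sketches this proof by deferring to the method of \autoref{lemmaeq1}: chartwise Jacobian-rank estimates away from the singular locus, the accessibility argument that poles must be preceded by zeroes, and the identification of the obstruction $h_0$ with the operator $\lambda z(\cdot)+\alpha(\cdot)'$ applied to the predecessor jet. Your observation that the factorised form $\bar{x}=y\,h$ forces the singularity to be parametrised one step before the zero develops is precisely the point recorded in \autoref{exampleeq3sings} and the surrounding examples.
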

In the same way as the other two examples, we see from \autoref{theoremeq3} that for regarding $B_m$ as a subset of $J^{(2m+2)}$, iterating the system gives a map $\varphi^{(2m+3)} : J^{(2m+3+r)} \rightarrow  J^{(r)}$, under which the image of $B_m$ is of codimension zero.

\section{Conclusions}

We now summarise our work and discuss questions that follow it naturally, again organised into two parts: firstly singularity analysis on the level of equations and secondly its geometric interpretation. On this first level, we have significantly extended previous studies of delay Painlev\'e equations and discovered new confinement type behaviour, which is interesting in its own right. In the process we have developed techniques for the analysis of singularity patterns of arbitrary length and proving confinement, which we hope will be useful in tackling one of the main difficulties in the singularity analysis of delay-differential equations. It would be interesting to adapt our methods to other integrable delay-differential equations, for example extensions of the examples considered in this paper such as the families generalising equation \eqref{dHKdV} isolated by Halburd and Korhonen by imposing Nevanlinna-theoretic integrability criteria \cite{RODRISTO}. Though preliminary calculations show that these equations admit some of the same confined singularity patterns as equation \eqref{dHKdV} (namely those associated with single, double and triple zeroes) it is a natural next step to determine whether these admit the same infinite families and whether this behaviour fits into our geometric framework.\\

Another question that arises from our work on the level of equations relates to the use of singularity analysis techniques to isolate integrability candidates. The fact that each of these three examples may be obtained by requiring confinement of only the simplest singularity in the family associated with zeroes of different orders prompts the question of whether and how this could ensure confinement of all singularities in the family. Further, there may be applications of our results to the search for elliptic function solutions of degenerate cases of delay Painlev\'e equations. For example, the $a=0$ and $p=0$ cases of equations \eqref{ddP1} and \eqref{dHKdV} respectively are known \cite{BJORNTHESIS} to admit elliptic function solutions. Degree 2 elliptic function solutions were identified with the help of singularity analysis, and in particular that these degenerate cases admit the singularity patterns associated with simple zeroes outlined in \autoref{section2}. These patterns are compatible with elliptic function solutions in the sense that the numbers of poles and zeroes in a pattern are equal (counted with multiplicity), and also that the residues of poles in the sequence sum to zero. We note that our proofs of the infinite families of singularity patterns are also valid for the degenerate cases, and we observe the same kind of compatibility with elliptic functions in all of them, so it would be interesting to determine whether they may be used to isolate higher degree elliptic function solutions.\\

The other aim of this work was to initiate the geometric study of delay Painlev\'e equations. We have put forward a geometric description of singularity confinement in these three examples, and we hope to have worked in convincing parallel with the discrete case, and in particular captured in our description the exceptional nature of these equations in terms of the recovery of initial data when a singularity is confined. By no means, however, is this geometric framework complete or definitive, and we hope that our ideas are refined and built upon through singularity analysis in more examples.

\subsection*{Acknowledgements}
The author would like to express his sincere thanks to R. Halburd for valuable discussions and advice.This research was supported by a University College London Graduate Research Scholarship and Overseas Research Scholarship.

\appendix

\section{Proofs of infinite families of singularity patterns}
We now give proofs of the results of \autoref{infinitefamilies} relating to infinite families of singularity patterns.

\subsection{Proof of Theorem 2.2} \label{proofeq1}
For equation \eqref{ddP1}, our strategy is to consider a singularity pattern beginning with $(\operatorname{rg}, 0^m)$, then derive and analyse recurrences for the coefficients in the expansions of the next $(2m+1)$ iterates, to deduce that the singularity pattern is as claimed. \\

Because the equation \eqref{ddP1} is autonomous we can take without loss of generality the zero of order $m$ to be at the origin, and start with the formal expansions
\begin{subequations} \label{initialexpansions}
\begin{align}
\ubar{u} &= \sum_{j=0}^{\infty} \ubar{u}_j z^j, \\
u &= \sum_{j=m}^{\infty} u_{j} z^{j}, \quad \quad u_m \neq 0.
\end{align}
\end{subequations}
Inserting these into the equation, we immediately see that $\bar{u}$ has a simple pole:
\begin{equation}
\bar{u} = - \frac{m \beta}{z} + O(1).
\end{equation}
The iterates of interest to us are $u= u^{(0)}, \bar{u}=u^{(1)}, u^{(2)}, \dots, u^{(2m)}, u^{(2m+1)}$. By inspection of the terms on the right-hand side of the forward iteration \eqref{ddP1fwd}, these will be either regular or poles of order at most one, so we introduce the notation
\begin{equation} \label{expansion}
u^{(i)} = \sum_{n=-1}^{\infty} u^{(i)}_{n} z^n,
\end{equation}
for $i=0, \dots, 2m+1$, where any number of the $u^{(i)}_{n}$ may be zero. \\

By deriving recurrences for the coefficients $u^{(i)}_n$, we will show firstly that $u_{-1}^{(i)} \neq 0$ for $i=1, \dots, 2m$, then that $u_{-1}^{(2m+1)} = u_{0}^{(2m+1)} = \dots = u_{m-1}^{(2m+1)} = 0$, from which we will deduce that $u^{(2m+1)} = \mathcal{O}(z^m)$, and in particular has a zero of order $m$ if the rest of the initial data is generic. \\

It will be helpful to introduce some notation to deal with the logarithmic derivative $u'/u$ in the forward iteration map.
\begin{lemma} \label{logderiv}
Let $r$ be a nonzero integer. If $u = \sum_{j=r}^{\infty} u_j z^j$ with $u_r$ nonzero, then 
\begin{equation*}
\frac{u'}{u} = \sum_{n=-1}^{\infty} U_{n} z^{n} ,
\end{equation*}
where the coefficients $U_n$ are given by $U_{-1} = r, ~U_0 = u_{r+1}/u_r,$ and so on according to the recurrence
\begin{equation*}
U_n =  \frac{1}{u_r} \left(  (n+1) u_{r+n+1} - \sum_{j=1}^{n} u_{r+j} U_{n-j} \right).
\end{equation*}
\end{lemma}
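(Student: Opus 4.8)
The plan is to avoid differentiating the series term-by-term and instead clear the denominator: setting $v \defeq u'/u$, everything reduces to matching coefficients in the polynomial identity $u' = v\,u$. First I would justify the claimed shape of the expansion. Writing $u = z^r h(z)$ with $h$ holomorphic at the origin and $h(0) = u_r \neq 0$, we have $v = u'/u = r/z + h'(z)/h(z)$, and $h'/h$ is holomorphic at $z=0$ because $h(0)\neq 0$. Hence $v$ has at worst a simple pole at the origin, which legitimises the expansion $v = \sum_{n=-1}^\infty U_n z^n$; comparing the $z^{-1}$ coefficients (the residue) gives $U_{-1} = r$ at once.

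Next I would expand both sides of $u' = v\,u$ as formal Laurent series and equate coefficients. Reindexing $u' = \sum_{j\ge r} j u_j z^{j-1}$ shows the coefficient of $z^{\,r-1+n}$ on the left is $(r+n)u_{r+n}$. On the right, the Cauchy product of $v = \sum_{a\ge -1} U_a z^a$ with $u = \sum_{m\ge r} u_m z^m$ contributes to $z^{\,r-1+n}$ precisely the terms with $a+m = r-1+n$, $a\ge -1$, $m\ge r$; writing $m = r+i$ forces $0\le i\le n$ and $a = n-1-i$, so that coefficient is $\sum_{i=0}^{n} u_{r+i} U_{n-1-i}$. Equating the two and isolating the $i=0$ term $u_r U_{n-1}$ yields a recurrence for $U_{n-1}$ in terms of lower-index $U$'s.

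Finally I would massage this into the stated form. Relabelling $n-1\mapsto n$ gives $U_n = u_r^{-1}\big((r+n+1)u_{r+n+1} - \sum_{i=1}^{n+1} u_{r+i} U_{n-i}\big)$, and the $i=n+1$ summand is $u_{r+n+1}U_{-1} = r\,u_{r+n+1}$. Substituting $U_{-1}=r$ and combining it with the leading term via $(r+n+1)-r = n+1$ collapses the expression to exactly $U_n = u_r^{-1}\big((n+1)u_{r+n+1} - \sum_{j=1}^{n} u_{r+j} U_{n-j}\big)$; the case $n=0$ recovers $U_0 = u_{r+1}/u_r$ since the sum is then empty. I expect the only genuine difficulty to be bookkeeping: keeping the index shifts consistent across the two reindexings and verifying the cancellation $(r+n+1)-r = n+1$, which is what removes all explicit dependence on $r$ from the recurrence, leaving it to enter only through the seed $U_{-1}=r$. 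A routine induction on $n$ then confirms that each $U_n$ is well defined, being determined by $U_0,\dots,U_{n-1}$ with the nonzero $u_r$ in the denominator.
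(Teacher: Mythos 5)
Your proof is correct. The paper states \autoref{logderiv} without any proof, treating it as a routine computation, so there is nothing to compare against; your argument — justifying the at-worst-simple-pole form of $u'/u$ via $u=z^rh(z)$ with $h(0)=u_r\neq 0$, then equating coefficients of $z^{r-1+n}$ in $u'=(u'/u)\,u$ and absorbing the explicit $r$-dependence into the seed $U_{-1}=r$ through the cancellation $(r+n+1)-r=n+1$ — is exactly the verification the author leaves to the reader, and all the index bookkeeping checks out.
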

We first deduce from the recurrence that following the zero of order $m$, the next $2m$ iterates have simple poles:
\begin{proposition} \label{residues}
The iterates $u^{(i)}$ have simple poles at $z=0$ for all $i=1, \dots, 2m$, and we have 
\begin{subequations} \label{uminus}
\begin{align}
u_{-1}^{(2k)} &= k \beta, &&\text{for } k=1, \dots, m,\\
u_{-1}^{(2k+1)} &= (k-m)\beta &&\text{for } k=0, \dots, m.
\end{align}
\end{subequations}
\end{proposition}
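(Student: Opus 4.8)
The plan is to extract from the forward iteration a two-term recurrence for the residues $u^{(i)}_{-1}$ and to solve it, using \autoref{logderiv} to control the logarithmic derivative. Writing \eqref{ddP1fwd} in terms of iterates as $u^{(i+1)} = u^{(i-1)} + a - b\,(u^{(i)})'/u^{(i)}$ and reading off the coefficient of $z^{-1}$ (the constant $a$ contributes only to $z^{0}$), I obtain
\begin{equation*}
u^{(i+1)}_{-1} = u^{(i-1)}_{-1} - b\,U^{(i)}_{-1},
\end{equation*}
where $U^{(i)}_{-1}$ denotes the residue of $(u^{(i)})'/u^{(i)}$. By \autoref{logderiv}, $U^{(i)}_{-1}$ is exactly the order of the leading term of $u^{(i)}$, hence equals $-1$ whenever $u^{(i)}$ has a simple pole and equals $m$ for the initial zero $u^{(0)}=u$. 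Setting $\beta = b$, and using that $u^{(-1)}=\ubar{u}$ is regular while $u^{(0)}$ vanishes to order $m$, so that $u^{(-1)}_{-1}=u^{(0)}_{-1}=0$, this gives the seed values $u^{(1)}_{-1}=-m\beta$ and $u^{(2)}_{-1}=\beta$.

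Before solving the recurrence I would record, by a short induction, that every iterate has at most a simple pole. The point is that the logarithmic derivative of a nonzero Laurent series always has a pole of order at most one, so if $u^{(i-1)}$ and $u^{(i)}$ have order $\geq -1$ then so does $u^{(i+1)} = u^{(i-1)} + a - b\,(u^{(i)})'/u^{(i)}$; the base cases $u^{(-1)},u^{(0)}$ have order $\geq 0$. Consequently, whenever an iterate has nonzero residue it is a genuine simple pole with $U^{(i)}_{-1}=-1$, and the recurrence collapses to $u^{(i+1)}_{-1} = u^{(i-1)}_{-1} + \beta$ for $i \geq 1$.

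The heart of the proof is then a single induction running over $i=1,\dots,2m$ that simultaneously maintains the simple-pole structure and the residue formulas. Splitting into even and odd indices, the collapsed recurrence with the above seeds yields $u^{(2k)}_{-1}=k\beta$ for $k=1,\dots,m$ and $u^{(2k+1)}_{-1}=(k-m)\beta$ for $k=0,\dots,m-1$. At each step I check that the predicted residue is nonzero, which holds throughout this range since $k\beta\neq 0$ for $1\leq k\leq m$ and $(k-m)\beta\neq 0$ for $0\leq k\leq m-1$; this keeps $U^{(i)}_{-1}=-1$ valid and propagates the recurrence. Substituting $k=m$ into the odd formula finally gives $u^{(2m+1)}_{-1}=(m-m)\beta=0$, recording the vanishing of the residue at the confinement step exactly as stated.

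The one genuine subtlety, which I would treat with care, is the interlocking of these inductions: the residue recurrence is valid only while $U^{(i)}_{-1}=-1$, which requires $u^{(i)}_{-1}\neq 0$, which is itself part of what is being proved. The resolution is that the residue formula guarantees nonvanishing precisely on $1\leq i \leq 2m$ and fails first at $i=2m+1$, so the self-consistency of the induction is exactly what both justifies the recurrence and pins down why the string of poles terminates after $2m$ steps. Making this explicit, rather than tacitly assuming the poles persist, is what turns the formal computation into a proof.
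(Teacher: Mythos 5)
Your proposal is correct and follows essentially the same route as the paper: both derive the residue recurrence $u^{(i+1)}_{-1} = u^{(i-1)}_{-1} + \beta$ from \autoref{logderiv} applied with $r=-1$, seed it with $u^{(0)}_{-1}=0$ and $u^{(1)}_{-1}=-m\beta$, and iterate over even and odd indices. Your explicit treatment of the interlocking induction (that the recurrence is only valid while the residues stay nonzero, which the formulas themselves guarantee on $1\leq i\leq 2m$) makes precise a point the paper passes over with ``for all $i$ such that $u^{(i)}$ has a simple pole,'' but it is the same argument.
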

\begin{proof}
We already have that $u^{(0)}_{-1} = 0$ and $u^{(1)}_{-1}= -m\beta$. We then insert the expansions \eqref{expansion} for the iterates $u^{(i)}$ into the relevant upshifts of the equation, making use of Lemma \ref{logderiv} with $r=-1$, which gives
\begin{equation}
u_{-1}^{(i+1)} = u_{-1}^{(i-1)} + \beta,
\end{equation}
for all $i$ such that $u^{(i)}$ has a simple pole. Iterating this from $i=1$ from the initial values for $u^{(0)}_{-1}, u^{(1)}_{-1}$, we see that $u^{(i)}$ have simple poles for all $i=1, \dots, 2m$, and we obtain the formulae \eqref{uminus}.
\end{proof}
It will now be helpful to introduce the following notation for the iterates:
\begin{align}
u^{(2k)} &= f^{(k)} = \sum_{n=-1}^{\infty} f_n^{(k)}z^n, && f_n^{(k)} = u_n^{(2k)}, \\
u^{(2k+1)} &= g^{(k)} = \sum_{n=-1}^{\infty} g_n^{(k)}z^n, && g_n^{(k)} = u_n^{(2k+1)},
\end{align}
for $k=0, \dots, m$.
As we now know that $u^{(1)}, \dots, u^{(2m)}$ have simple poles at $z=0$, we use Lemma \ref{logderiv} to write the logarithmic derivatives of $f^{(k)}, g^{(k-1)}$, for $k=1,\dots m$ as 
\begin{equation} \label{Us}
\frac{f^{(k)'}}{f^{(k)}} = \sum_{n=-1}^{\infty} F^{(k)}_n z^n, \quad \quad \frac{g^{(k)'}}{g^{(k)}} = \sum_{n=-1}^{\infty}G^{(k)}_n z^n.
\end{equation}
Further, we have from \eqref{uminus} that for $k=0, \dots, m$ that $F_{-1}^{(k)} = k \beta, ~G_{-1}^{(k)} = (m-k) \beta$, so we have the following recursive formulae for $F_n^{(k)}, G_n^{(k)}$:
\begin{subequations} \label{FGrecs}
\begin{align}
F_n^{(k)} &=  \frac{1}{k\beta} \left(  (n+1) f^{(k)}_{n} - \sum_{j=1}^{n} f^{(k)}_{j-1} F^{(k)}_{n-j}\right), \\
G_n^{(k)} &=  \frac{1}{(k-m)\beta} \left( (n+1) g^{(k)}_{n} - \sum_{j=1}^{n} g^{(k)}_{j-1} G^{(k)}_{n-j} \right), 
\end{align}
\end{subequations}
valid for all $k$ such that $f^{(k)}, g^{(k)}$ have simple poles. Using this notation, the forward iteration then leads to the recurrences,
\begin{subequations} \label{fg0rec}
\begin{align}
f_{0}^{(k)} &= f_{0}^{(k-1)} + \alpha - \beta G_0^{(k-1)}, \\
g_{0}^{(k)} &= g_{0}^{(k-1)} + \alpha - \beta F_0^{(k)},
\end{align}
\end{subequations}
and 
\begin{subequations} \label{fgnrec}
\begin{align}
f_{n}^{(k)} &= f_{n}^{(k-1)}  - \beta G_n^{(k-1)}, \\
g_{n}^{(k)} &= g_{n}^{(k-1)}  - \beta F_n^{(k)},
\end{align}
\end{subequations}
for $n \geq 1$, and $k=1,\dots, m$. Using \eqref{FGrecs} with $n=0$, we see that the recurrences \eqref{fg0rec} are a linear system of difference equations for $f_0^{(k)}, g_0^{(k)}$:
\begin{subequations} \label{fg0rec2}
\begin{align}
f_{0}^{(k)} &= f_{0}^{(k-1)} + \alpha - \frac{1}{(k-1-m)} g_0^{(k-1)}, \\
g_{0}^{(k)} &= g_{0}^{(k-1)} + \alpha - \frac{1}{k} f_0^{(k)},
\end{align}
\end{subequations}
subject to the initial conditions $f_0^{(0)} = 0$ and $g_0^{(0)} = u_0^{(1)} = \alpha+ \ubar{u}_0 - \beta u_1/u_0$ determined  by the initial data $\ubar{u}, u$. The unique solution of \eqref{fg0rec2} subject to these initial conditions is given by
\begin{subequations} \label{u0sol}
\begin{align}
f_0^{(k)} &= k (\alpha+C), \\
g_0^{(k)} &= (m-k)C,
\end{align}
\end{subequations}
where $C = u_0^{(1)}/m$.
Similarly, after using the formula \eqref{FGrecs}, the recurrences \eqref{fgnrec} become
\begin{subequations} \label{unrec2}
\begin{align}
f_{n}^{(k)} &= f_{n}^{(k-1)}  - \frac{1}{(k-1-m)} \left( (n+1) g^{(k-1)}_{n} - \sum_{j=1}^{n} g^{(k-1)}_{j-1} G^{(k-1)}_{n-j} \right), \\
g_{n}^{(k)} &= g_{n}^{(k-1)} - \frac{1}{k}  \left( (n+1) f^{(k)}_{n} - \sum_{j=1}^{n} f^{(k)}_{j-1} F^{(k)}_{n-j} \right), 
\end{align}
\end{subequations}
subject to the initial conditions $f_n^{(0)}=0$ for $n=0,\dots, m-1$, and $g_n^{(0)}$ fixed by the initial data $\ubar{u}, u$. Given the solution \eqref{u0sol}, the $n=1$ case is then a linear system of recurrences in $k$ for $f_1^{(k)}, g_1^{(k)}$, which may be solved by elementary methods. With both $n=0,1$ solutions in hand the system for $f_2^{(k)}, g_2^{(k)}$ can be solved, and so on. Observations of these solutions lead us to the following proposition:

\begin{proposition}
The unique solution to \eqref{unrec2} subject to the initial conditions is given by $(f_n^{(k)},g_n^{(k)})$, $n=0, \dots, m-1, k= 0,\dots, m$ of the form
\begin{subequations} \label{unsol}
\begin{align}
f_n^{(k)} &= k P_n^{(k)} \\
g_n^{(k)} &= (k-m)Q_n^{(k)},
\end{align}
\end{subequations}
where $P_n^{(k)}, Q_n^{(k)}$ are polynomial in $k$ of degree at most $n$.
\end{proposition}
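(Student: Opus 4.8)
The plan is to prove \eqref{unsol} by induction on $n$, running from $n=0$ to $n=m-1$, and to carry alongside it the companion claim that the coefficients $F_n^{(k)},G_n^{(k)}$ of the logarithmic derivatives defined in \eqref{FGrecs} are themselves polynomials in $k$ of degree at most $n$. The base case $n=0$ is precisely \eqref{u0sol}: there $P_0^{(k)}=\alpha+C$ and $Q_0^{(k)}=-C$ are constant in $k$, and \eqref{FGrecs} then makes $F_0^{(k)},G_0^{(k)}$ constant as well. The structural feature driving the whole argument, already visible in \autoref{residues} and \eqref{uminus}, is that every $f$-iterate carries a factor $k$ and every $g$-iterate a factor $(k-m)$.

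For the inductive step, assume \eqref{unsol} and the companion claim for all indices below $n$. The convolution sums in \eqref{unrec2} and \eqref{FGrecs}, namely $\sum_{j=1}^{n} g^{(k)}_{j-1}G^{(k)}_{n-j}$ and $\sum_{j=1}^{n} f^{(k)}_{j-1}F^{(k)}_{n-j}$, involve only coefficients of index strictly below $n$; by the inductive hypothesis each $g^{(k)}_{j-1}$ carries a factor $(k-m)$ and each $f^{(k)}_{j-1}$ a factor $k$, so these sums factor as $(k-m)\tilde A_n^{(k)}$ and $k\tilde B_n^{(k)}$ with $\tilde A_n^{(k)},\tilde B_n^{(k)}$ polynomial in $k$ of degree at most $n-1$. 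Substituting the ansatz $f_n^{(k)}=kP_n^{(k)}$, $g_n^{(k)}=(k-m)Q_n^{(k)}$ into \eqref{unrec2}, the factor $(k-1-m)$ of $g_n^{(k-1)}$ cancels the denominator $k-1-m$, and the factor $k$ of $f_n^{(k)}$ cancels the denominator $k$, so that the recurrences collapse to the polynomial difference equations
\begin{subequations}
\begin{align}
kP_n^{(k)} &= (k-1)P_n^{(k-1)} - (n+1)Q_n^{(k-1)} + \tilde A_n^{(k-1)}, \\
(k-m)Q_n^{(k)} &= (k-1-m)Q_n^{(k-1)} - (n+1)P_n^{(k)} + \tilde B_n^{(k)}.
\end{align}
\end{subequations}
The cancellation of the apparent poles at $k=0$ and $k=m+1$ is exactly the algebraic shadow of confinement, and it is what keeps the whole analysis within polynomials in $k$.

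It then remains to produce polynomials $P_n,Q_n$ of degree at most $n$ solving this coupled system, exactly as was carried out by elementary means for $n=0,1$. I would read the two equations as a first-order recurrence in $k$ for the pair $(f_n^{(k)},g_n^{(k)})$: the first determines $f_n^{(k)}$ from $f_n^{(k-1)},g_n^{(k-1)}$, and the second then determines $g_n^{(k)}$ from the freshly computed $f_n^{(k)}$. Summing in $k$, the additive constant of $f_n$ is forced to vanish by the entry condition $f_n^{(0)}=0$, which is precisely what reproduces the factor $k$ in $f_n^{(k)}$, while the constant of $g_n$ is fixed by the datum $g_n^{(0)}$ through $Q_n^{(0)}=-g_n^{(0)}/m$. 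A degree count, using that $\tilde A_n,\tilde B_n$ have degree at most $n-1$, confirms that the solution stays within polynomials $P_n,Q_n$ of degree at most $n$; feeding the resulting form back into \eqref{FGrecs}, where the same factors $k$ and $(k-m)$ cancel the denominators, then yields the companion claim for index $n$. Uniqueness of the solution of \eqref{unrec2} for prescribed initial data finally identifies these polynomials with the actual coefficients.

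The step I expect to be the real obstacle is establishing the factor $(k-m)$ in $g_n^{(k)}$, equivalently the vanishing $g_n^{(m)}=0$. Whereas the factor $k$ in $f_n^{(k)}$ is handed to us by the entry condition $f_n^{(0)}=0$, the vanishing of $g_n$ at $k=m$ is a statement at the opposite end of the pattern and is nothing other than the analytic content of confinement: it is what makes $u^{(2m+1)}$ recover a zero of order $m$ rather than develop a further pole. Concretely, one must show that the solution of the second difference equation matching the data at $k=0$ is automatically divisible by $(k-m)$; this is a genuine compatibility condition on the lower-order inputs $\tilde A_n,\tilde B_n$, not a formal consequence of the ansatz. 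The palindromic reversibility of \eqref{ddP1} — its forward and backward iterations coincide up to the signs of $\alpha,\beta$, and the pattern is symmetric under $k\mapsto m-k$ with $f$ and $g$ interchanged — explains why such a factor must appear and is the right organizing principle, but the vanishing itself has to be extracted from the explicit coupled solution. Verifying that this compatibility holds at every order, together with the degree bound surviving the coupling (at top degree the two equations impose only the single relation $p_n+q_n=0$ on the leading coefficients $p_n,q_n$ of $P_n,Q_n$, the residual freedom being absorbed by $g_n^{(0)}$), is the crux that closes the induction.
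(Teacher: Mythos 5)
Your setup coincides with the paper's own: induction on $n$, the companion observation that $F_j^{(k)},G_j^{(k)}$ for $j<n$ are polynomial in $k$, the factoring of the convolution sums so that the denominators $k$ and $k-1-m$ in \eqref{unrec2} cancel, and the reduction to a coupled first-order linear difference system in $k$ with polynomial inhomogeneity of degree at most $n-1$. But you stop precisely at the point you yourself identify as the crux: you never prove that the solution of this system with the prescribed data $f_n^{(0)}=0$, $g_n^{(0)}=u_n^{(1)}$ actually carries the factor $(k-m)$ in $g_n^{(k)}$ and respects the degree bound. Substituting the ansatz and observing that the recurrences "collapse" only derives consequences of the ansatz; it does not establish it. Saying the vanishing $g_n^{(m)}=0$ "has to be extracted from the explicit coupled solution" and that the compatibility must be "verified at every order" is a statement of the problem rather than its resolution, so as written the argument has a genuine gap at its decisive step.

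The paper closes this gap by a concrete computation: it inserts the full ansatz \eqref{unansatz}, $f_n^{(k)}=k\sum_{j=0}^{n}a_jk^j$, $g_n^{(k)}=(k-m)\sum_{j=0}^{n}b_jk^j$ with $b_0=-u_n^{(1)}/m$ forced by the initial data, into \eqref{uneqs}, equates coefficients of powers of $k$ to obtain the $(2n+1)\times(2n+1)$ linear system \eqref{linearsystem}, $\mathcal{M}_n\mathbf{v}=\mathbf{c}$, and shows $\det\mathcal{M}_n=(n!)^2(m-n)_n\neq 0$ for $n\le m-1$; existence of a polynomial solution matching the data, combined with uniqueness of the solution of the recurrence, then completes the induction. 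Note that this determinant is also the only place the restriction $n\le m-1$ enters, and it must enter somewhere: the divisibility of $g_n^{(k)}$ by $(k-m)$ fails at $n=m$ (otherwise $u^{(2m+1)}$ would vanish to order $m+1$ and the singularity would never be exited), so no completion along the lines you sketch --- summing the recurrence in $k$ plus a degree count, with no mechanism that degenerates at $n=m$ --- can be correct as stated. Likewise, the appeal to the $k\mapsto m-k$ reversibility of the pattern is a sensible organizing heuristic, but it cannot by itself transport the condition $f_n^{(0)}=0$ into $g_n^{(m)}=0$ once the asymmetric inhomogeneous terms $\lambda_j,\mu_j$ inherited from the generic initial data are present.
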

\begin{proof}
We have from formulae \eqref{u0sol} that the statement is true for $n=0$, so we proceed by induction. Suppose that $f_0^{(k)}, \dots, f_{n-1}^{(k)}$ and $g_0^{(k)}, \dots, g_{n-1}^{(k)}$ are of the form \eqref{unsol}. The recursive formulae \eqref{FGrecs} then imply that $F_0^{(k}, \dots, F_{n-1}^{(k)}$ and $G_0^{(k)}, \dots, G_{n-1}^{(k)}$ are polynomial in $k$, of degree at most $n-1$. We then see that the following terms from \eqref{unrec2} are polynomial in $k$ of degree at most $n-1$: 
\begin{subequations}
\begin{align}
\frac{1}{k} \sum_{j=1}^{n} f^{(k)}_{j-1} F^{(k)}_{n-j} &= \sum_{j=1}^{n} P^{(k)}_{j-1} F^{(k)}_{n-j} = \sum_{j=0}^{n-1} \lambda_j k^j , \\
 \frac{1}{(k-m)} \sum_{j=1}^{n} g^{(k)}_{j-1} G^{(k)}_{n-j} &= \sum_{j=1}^{n} Q^{(k)}_{j-1} G^{(k)}_{n-j} = \sum_{j=0}^{n-1} \mu_j k^j ,
\end{align}
\end{subequations}
so we have
\begin{subequations} \label{uneqs}
\begin{align}
f_{n}^{(k+1)} &= f_{n}^{(k)}  - \frac{n+1}{k-m} g^{(k)}_{n} + \sum_{j=0}^{n-1} \mu_j k^j, \\
g_{n}^{(k+1)} &= g_{n}^{(k)} - \frac{n+1}{k+1} f^{(k)}_{n} \sum_{j=0}^{n-1} \lambda_j k^j , 
\end{align}
\end{subequations}
We write our ansatz \eqref{unsol} for the solution to this equation as 
\begin{subequations} \label{unansatz}
\begin{align}
f_n^{(k)} &= k \sum_{j=0}^{n} a_j k^j\\
g_n^{(k)} &= (k-m)\sum_{j=0}^{n} b_j k^j.
\end{align}
\end{subequations}
We note that one initial condition $u_n^{(0)} = 0$ is satisfied automatically, but imposing the other requires us to set
\begin{equation}
b_0 = - u_n^{(1)}/m.
\end{equation}
We now insert the ansatz \eqref{unansatz} into the equation \eqref{uneqs} and equate coefficients of powers of $k$ to obtain a linear system in $2n+1$ variables $a_0, \dots, a_n, b_1, \dots, b_n$:
\begin{subequations} \label{linearsystem}
\begin{align}
0 &= a_n + b_n, \\
\lambda_i &= (n+1) a_i + (i+1)b_i + \sum_{j=i+1}^n (-1)^{j-i}\left( \left(\begin{array}{c}j +1\\i\end{array}\right) +m \left(\begin{array}{c}j \\i\end{array}\right) \right) b_j , \\
\mu_i &= \sum_{j=i}^n \left(\begin{array}{c}j \\i\end{array}\right) a_j + (n+1) b_i, \\
\lambda_0 &= (n+1) a_0 +(m+1) \sum_{j=1}^n (-1)^j b_j,\\
\mu_0 &= \sum_{j=0}^n a_j +(n+1) b_0, 
\end{align}
\end{subequations}
for $i=1, \dots, n-1$. We write this as $\mathcal{M}_n \mathbf{v} = \mathbf{c}$, where $\mathbf{v} = (a_0, \dots, a_n, b_1, \dots, b_n)^T, \mathbf{c} = (0, \lambda_0, \dots, \lambda_n, \mu_0, \dots, \mu_n)^T$ and $\mathcal{M}_n$ is the square matrix of size $2n+1$ giving the right-hand side of the system \eqref{linearsystem}. A simple sequence of row and column operations yields an upper-triangular matrix and we obtain
\begin{equation}
\det \mathcal{M}_n = (n!)^2 (m-n)_n,
\end{equation}
where $(a)_n = \prod_{i=0}^{n-1} (a+i)$ is the usual Pochammer symbol, so the matrix is nonsingular for $n\leq m-1$, showing that the unique solution of the recurrence \eqref{uneqs} is of the form \eqref{unansatz} and the inductive step is complete.  
\end{proof}
Together with Proposition \ref{residues}, this allows us deduce that $u_n^{(2m+1)}=0$ for $m \geq n$, and thus that $u^{(2m+1)} = O(z^m)$. 
\subsection{Proof of Theorem 2.1} \label{proofeq2}
While we may proceed along the same lines as in \autoref{proofeq1}, a shortcut is provided by a known Miura-type transformation between equation \eqref{ddP1} and equation \eqref{dHKdV}. This may be easily detected given the well-known transformation between the differential-difference systems that give these equations as similarity reductions, and is proved by direct calculation:
\begin{lemma}
If $v$ solves \eqref{dHKdV} with parameters $p,q$, then $u = \ubar{v} v$ solves \eqref{ddP1} with parameters $a = 2p, b = -q$.
\end{lemma}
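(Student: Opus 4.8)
The plan is to verify the claim by direct substitution, using equation \eqref{dHKdV} together with its downshift to eliminate the outermost shifts $\bar{v}$ and $\ubar{\ubar{v}}$. First I would record the relevant shifts of $u = \ubar{v}\, v$: since $\bar{u}(z) = u(z+1) = v(z)v(z+1) = v\bar{v}$ and $\ubar{u}(z) = u(z-1) = v(z-2)v(z-1) = \ubar{\ubar{v}}\,\ubar{v}$, the left-hand side of \eqref{ddP1} becomes
\begin{equation*}
u(\bar{u} - \ubar{u}) = \ubar{v}\, v\,(v\bar{v} - \ubar{\ubar{v}}\,\ubar{v}) = \ubar{v}\,(v^2 \bar{v}) - v\,(\ubar{v}^2\, \ubar{\ubar{v}}).
\end{equation*}

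The key step is to rewrite the two cubic terms using \eqref{dHKdV}. Directly, $v^2 \bar{v} = v^2 \ubar{v} + p v + q v'$, and from the downshift of \eqref{dHKdV} (replacing $z$ by $z-1$, and noting that differentiation commutes with the shift so that $\ubar{v}'$ is the downshift of $v'$) one gets $\ubar{v}^2 \ubar{\ubar{v}} = \ubar{v}^2 v - p\ubar{v} - q\ubar{v}'$. Substituting both of these and expanding, I expect the quartic terms $v^2\ubar{v}^2$ to cancel, leaving
\begin{equation*}
u(\bar{u} - \ubar{u}) = 2p\, v\ubar{v} + q\,(v'\ubar{v} + v\ubar{v}').
\end{equation*}

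Finally I would recognise, by the product rule, that $v'\ubar{v} + v\ubar{v}' = (v\ubar{v})' = u'$ and that $v\ubar{v} = u$, so the right-hand side is exactly $2p\,u + q\,u'$. Comparing with the right-hand side $au - bu'$ of \eqref{ddP1} then forces $a = 2p$ and $b = -q$, as claimed. There is no serious obstacle here: the computation is short, and the only points requiring care are the bookkeeping of the shifts of $u$ and the observation that the shift commutes with differentiation, which guarantees that the downshifted equation takes the stated form. The one genuinely structural feature is the cancellation of the quartic terms $v^2\ubar{v}^2$, which is precisely what makes this a Miura-type transformation rather than merely an algebraic identity.
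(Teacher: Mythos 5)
Your proposal is correct and matches the paper's approach: the paper states that the lemma ``is proved by direct calculation,'' and your substitution of \eqref{dHKdV} and its downshift into $u(\bar{u}-\ubar{u})$, with the cancellation of the quartic terms and the product rule giving $u'$, is exactly that calculation carried out in full.
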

So, we consider a singularity pattern for equation \eqref{dHKdV} beginning with $(\ubar{v}, v) = (\operatorname{rg},0^m)$, and we also assume that the zero has developed while iterating through regular and nonzero iterates, so $\ubar{\ubar{v}}$ is also regular. Then under the transformation to a solution of \eqref{ddP1}, we have 
\begin{equation*}
(\ubar{u}, u ) = (\ubar{\ubar{v}} \ubar{v}, \ubar{v} v ) = (\operatorname{rg}, 0^m),
\end{equation*} 
so the transformation gives us a singularity pattern for \eqref{ddP1}, which by \autoref{theoremeq1} must be
\begin{equation*}
\left( \operatorname{rg}, 0^m , \infty^1, \infty^1, \infty^{1} , \dots, \infty^1 , \infty^1, 0^m, \operatorname{rg} \right),
\end{equation*}
with $u^{(k)} \sim \zeta^{-1}$ for $k=1, \dots, 2m$, then $u^{(2m+1)} \sim \zeta^m$, and $u^{(2m+2)}$ regular. So this implies that the iterates $v^{(k)}$ in the singularity pattern must satisfy:
\begin{subequations}
\begin{align}
v^{(k-1)} v^{(k)} &= u^{(k)} \sim \zeta^{-1} \quad \text{ for } k=1, \dots, 2m, \label{miuraorderseq2}\\
v^{(2m)} v^{(2m+1)} &= u^{(2m+1)} \sim \zeta^m ,\label{miuraorderseq2m} \\
v^{(2m+1)} v^{(2m+2)} &= u^{(2m+2)} = \mathcal{O}(\zeta^0).\label{miuraorderseq20}
\end{align}
\end{subequations}
Beginning with our assumption that $v^{(0)} \sim \zeta^m$, we see from the $k=1$ case of equation \eqref{miuraorderseq2} that $v^{(1)} \sim \zeta^{-(m+1)}$, and then using the $k=2, \dots, 2m$ cases successively that 
\begin{equation}
v^{(2k)} \sim \zeta^m, \text{ for } k = 0, \dots, m, \quad \quad v^{(2k+1)} \sim \zeta^{-(m+1)} \text{ for } k = 0, \dots, m-1.
\end{equation}
Then using equations \eqref{miuraorderseq2m} and \eqref{miuraorderseq20} we have that $v^{(2m+1)} \sim \zeta^0$ and $v^{(2m+2)} = \mathcal{O}(\zeta^0)$ and the proof is complete.

\subsection{Proof of Theorem 2.3}
Again, while the strategy and techniques from the proof of \autoref{theoremeq1} are available for this case, a shortcut is provided by the following transformation between equation \eqref{ddP1} and equation \eqref{eq20}, which was pointed out in \cite{RGMOREIRA}:
\begin{lemma}
If $w$ solves \eqref{eq20} with parameters $\lambda, \alpha$, then $u = \bar{w}/\ubar{w}$ solves \eqref{ddP1} with parameters $a= 2 \lambda, b = - \alpha$.
\end{lemma}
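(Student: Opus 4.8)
The plan is to exploit the fact that equation \eqref{eq20} rewrites so that $u = \bar{w}/\ubar{w}$ appears directly. Dividing \eqref{eq20} through by $w\ubar{w}$ (legitimate wherever the relevant iterates are nonzero, and hence as a formal identity) gives
\[
u = \frac{\bar{w}}{\ubar{w}} = \lambda z + \alpha\, \frac{w'}{w},
\]
which is exactly the forward-iteration form \eqref{eq20fwd}. First I would record these as two complementary descriptions of $u$: the ratio of shifts $\bar{w}/\ubar{w}$, and the explicit expression $\lambda z + \alpha (w'/w)$ involving the logarithmic derivative of $w$.

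Next I would compute the shifts of $u$ from the explicit form, using that differentiation commutes with the shift (so that $\bar{w}' = \overline{w'}$, and similarly for the down-shift). This gives
\[
\bar{u} = \lambda(z+1) + \alpha\, \frac{\bar{w}'}{\bar{w}}, \qquad \ubar{u} = \lambda(z-1) + \alpha\, \frac{\ubar{w}'}{\ubar{w}},
\]
whence
\[
\bar{u} - \ubar{u} = 2\lambda + \alpha\left( \frac{\bar{w}'}{\bar{w}} - \frac{\ubar{w}'}{\ubar{w}} \right),
\]
and multiplying by $u$ yields $u(\bar{u} - \ubar{u}) = 2\lambda\, u + \alpha\, u\left( \bar{w}'/\bar{w} - \ubar{w}'/\ubar{w} \right)$.

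The key step is to recognise the surviving bracket as the logarithmic derivative of $u$ itself. Using the ratio-of-shifts description,
\[
\frac{u'}{u} = (\log u)' = (\log \bar{w})' - (\log \ubar{w})' = \frac{\bar{w}'}{\bar{w}} - \frac{\ubar{w}'}{\ubar{w}},
\]
so that $u\left( \bar{w}'/\bar{w} - \ubar{w}'/\ubar{w} \right) = u'$. Substituting this in gives precisely $u(\bar{u} - \ubar{u}) = 2\lambda\, u + \alpha\, u'$, which is equation \eqref{ddP1} with $a = 2\lambda$ and $b = -\alpha$.

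I do not anticipate a serious obstacle: once the dual description of $u$ is in hand, the verification is only a few lines. The points needing care are the bookkeeping between shifts and derivatives --- in particular that the shift commutes with $d/dz$ --- and the observation that turns the potentially messy direct computation of $(\bar{w}/\ubar{w})'$ into the one-line logarithmic-derivative identity above. It is precisely this logarithmic-derivative structure, shared by the forward-iteration forms of \eqref{eq20} and \eqref{ddP1}, that makes the Miura-type transformation transparent, so I would organise the write-up around it rather than expanding $u'$ by the quotient rule.
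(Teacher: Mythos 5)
Your proof is correct: the identity $u=\bar{w}/\ubar{w}=\lambda z+\alpha w'/w$ together with its shifts and the logarithmic-derivative identity $u'/u=\bar{w}'/\bar{w}-\ubar{w}'/\ubar{w}$ gives exactly $u(\bar{u}-\ubar{u})=2\lambda u+\alpha u'$, which is \eqref{ddP1} with $a=2\lambda$, $b=-\alpha$. The paper supplies no explicit argument for this lemma (it is attributed to \cite{RGMOREIRA} and left as a direct calculation), and your write-up is precisely the direct verification being relied upon, organised in the cleanest way.
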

Similarly to in the previous section, we consider a singularity pattern for equation \eqref{eq20} beginning with $(\ubar{w}, w)$, where $\frac{d^i}{dz^i} \left(\lambda z \ubar{w}(z) + \alpha \ubar{w}'(z) \right) = 0$ at $z=z_0$ for $i=0,\dots, m-1$ and $w \sim \zeta^m$. and we also assume that the zero has developed while iterating through regular and nonzero iterates, so $\ubar{\ubar{w}}, \ubar{\ubar{\ubar{w}}}$ are also regular and nonzero. Then under the transformation to a solution of \eqref{ddP1}, we have 
\begin{equation*}
(\ubar{\ubar{u}},\ubar{u}) = (\frac{\ubar{w}}{\ubar{\ubar{\ubar{w}}}}, \frac{w}{\ubar{\ubar{w}}}) = (\operatorname{rg}, 0^m),
\end{equation*} 
so the transformation gives us a singularity pattern for \eqref{ddP1}, which by \autoref{theoremeq1} must be
\begin{equation*}
\left( \operatorname{rg}, 0^m , \infty^1, \infty^1, \infty^{1} , \dots, \infty^1 , \infty^1, 0^m, \operatorname{rg} \right),
\end{equation*}
with $u^{(k)} \sim \zeta^{-1}$ for $k=0, \dots, 2m-1$, then $u^{(2m)} \sim \zeta^m$, and $u^{(2m+2)}$ regular. So this implies that the iterates $w^{(k)}$ in the singularity pattern must satisfy:
\begin{subequations}
\begin{align}
\frac{w^{(k+1)}}{w^{(k-1)}}&= u^{(k)} \sim \zeta^{-1} \quad \text{ for } k=0, \dots, 2m-1, \label{miuraorderseq3}\\
\frac{w^{(2m+1)}}{w^{(2m-1)}} &= u^{(2m)} \sim \zeta^m ,\label{miuraorderseq3m} \\
\frac{w^{(2m+2)}}{w^{(2m)}} &= u^{(2m+1)} = \mathcal{O}(\zeta^0).\label{miuraorderseq30}
\end{align}
\end{subequations}
Beginning with our assumptions that $w^{(0)} \sim \zeta^m$ and $w^{(-1)}$ is regular, we see recursively from equation \eqref{miuraorderseq3} that
\begin{equation}
w^{(2k)} \sim \zeta^{m-k} \text{ for } k=0, \dots m-1, \quad \quad w^{(2k+1)} \sim \zeta^{-k} \text{ for } k=0, \dots,  m.
\end{equation}
Then using equations \eqref{miuraorderseq3m} and \eqref{miuraorderseq30} we see that $w^{(2m+1)} \sim \zeta^0$ and $w^{(2m+2)} = \mathcal{O}(\zeta^0)$ and the proof is complete.

\bibliographystyle{amsalpha}
\bibliography{delaybibliography}

\end{document}